\documentclass[twocolumn,prl,superscriptaddress, 
]{revtex4-1}
\usepackage{graphicx}
\usepackage{amsfonts}
\usepackage{amssymb}
\usepackage{amsthm}
\usepackage{array}
\usepackage{amsmath}
\usepackage{verbatim} 
\usepackage{hyperref}
\usepackage{color}
\usepackage{bbold}
\usepackage{epstopdf}
\usepackage{mathtools}
\usepackage{enumerate}
\usepackage{ulem}

\newtheorem{proposition}{Proposition}
\newtheorem*{proposition*}{Proposition}

\newtheorem{theorem}{Theorem}
\newtheorem*{theorem*}{Theorem}
\newtheorem{corollary}{Corollary}[theorem]
\newtheorem*{corollary*}{Corollary}

\newtheorem{lemma}{Lemma}

\newcommand{\ket}[1]{\left\vert#1\right\rangle}
\newcommand{\bra}[1]{\left\langle#1\right\vert}

\def\bra#1{\langle #1|}
\def\ket#1{\left|#1 \right>}

\def\Tr{\mbox{Tr}}

\begin{document}
\title{Nonclassicality as a Quantifiable Resource for Quantum Metrology}
\author{Hyukjoon Kwon}
\affiliation{Center for Macroscopic Quantum Control and Institute of Applied Physics, Department of Physics and Astronomy, Seoul National University, Seoul, 151-742, Korea}
\affiliation{QOLS, Blackett Laboratory, Imperial College London, London SW7 2AZ, United Kingdom}
\author{Kok Chuan Tan}
\affiliation{Center for Macroscopic Quantum Control and Institute of Applied Physics, Department of Physics and Astronomy, Seoul National University, Seoul, 151-742, Korea}
\author{Tyler Volkoff}
\affiliation{Department of Physics, Konkuk University, Seoul 05029, Korea}
\author{Hyunseok Jeong}
\affiliation{Center for Macroscopic Quantum Control and Institute of Applied Physics, Department of Physics and Astronomy, Seoul National University, Seoul, 151-742, Korea}
\date{\today}
\begin{abstract}
We establish the nonclassicality of continuous-variable states as a resource for quantum metrology. Based on the quantum Fisher information of multimode quadratures, we introduce the metrological power as a measure of nonclassicality with a concrete operational meaning of displacement sensitivity beyond the classical limit. This measure belongs to the resource theory of nonclassicality, which is nonincreasing under linear optical elements. Our Letter reveals that a single copy, highly nonclassical quantum state is intrinsically advantageous when compared to multiple copies of a quantum state with moderate nonclassicality.  This suggests that metrological power is related to the degree of quantum macroscopicity. Finally, we demonstrate that metrological resources useful for nonclassical displacement sensing tasks can be always converted into a useful resource state for phase sensitivity beyond the classical limit.
\end{abstract}
\maketitle

Recognizing the differences between classical and quantum physics has changed our viewpoint of nature, while developments in quantum information theory have shown that these differences lead to quantum advantages in informational tasks~\cite{Ekert91, Bennett92, Bennett1993, Giovannetti06, Gisin07, Pironio2010, Raussendorf2001}. Nonclassicality can be defined by the negativity of the Glauber-Sudarshan $P$ representation in the context of  light fields~\cite{Glauber63, Sudarshan63, Titulaer1965}. An $N$-mode continuous-variable state $\hat\rho$ can be represented as
$$
\hat\rho = \frac{1}{\pi^N} \int d^{2N} \boldsymbol{\alpha} P_{\hat\rho} (\boldsymbol\alpha) \ket{\boldsymbol\alpha} \bra{\boldsymbol\alpha},
$$
where $P_{\hat\rho}(\boldsymbol\alpha)$ is the $P$ function and the set of coherent states forms an overcomplete basis $\ket{\boldsymbol\alpha} = \bigotimes_{n=1}^N\ket{\alpha_n}$ in the corresponding Hilbert space. As coherent states are considered to be the most classical states among all pure states \cite{Glauber63, Sudarshan63, Mandel86, Zurek93}, a nonclassical quantum state, which cannot be represented as a statistical mixture of coherent states, should contain negativity in its $P$ function  \cite{Titulaer1965}.

A diverse range of studies have been performed to characterize nonclassicality \cite{Vogel00, Richter02, KZ04, Jiang13, Rigovacca16, Perina17, Damanet18}, as well as its relationship to entanglement \cite{Kim02, Asboth05, Vogel14} and quantum communications \cite{Smith11, Lercher13}. For the quantification of nonclassicality, various approaches have been suggested including distance-based measures \cite{Hillery1987, Marian2002}, nonclassicality depth \cite{Lee91}, entanglement potential \cite{Asboth05, Vogel14}, characteristic function methods \cite{Ryl17}, and operational approaches \cite{Rivas10, Gehrke12}.  
Recently, the nonclassicality based on the negativity of the $P$ function was investigated   using the resource
theory of coherence \cite{Bobby17}. 
The orthogonalization process suggested in Ref.~\cite{Bobby17} successfully unifies the old notion of nonclassicality \cite{Glauber63, Sudarshan63, Titulaer1965} and the new concept of  coherence \cite{Baumgratz14} in the coherent-state basis. Emerging from this characterization is a resource theory of nonclassicality based on linear optics, where the set of classical operations are naturally chosen as linear optical operations. The challenge is then to find a quantifier of nonclassicality based on the resource theory that possesses a clear operational significance, paralleling the developments in the entanglement \cite{Horodecki09} and coherence \cite{Baumgratz14, StreltsovRMP} theories. It has been found  that in metrological tasks, nonclassicality rather than entanglement is a necessary resource to achieve quantum advantages \cite{Sahota15, Friis15, Ge18}, while the operational meaning of nonclassicality was very recently studied based on the quadrature fluctuations in a similar vein \cite{YadinX,noteadded}.

In this Letter, we demonstrate that the nonclassicality of a continuous-variable state is a quantifiable resource for parameter estimation tasks.
We show that the mean quadrature variance captures every pure-state nonclassicality, and its convex roof construction becomes a strict measure of nonclassicality. Extending this concept, we introduce the metrological power to quantify nonclassical resources that lead to quantum enhancement in displacement metrology, given in the form of the quantum Fisher information (QFI). We prove that this quantifier witnesses the negativity of the $P$ function and does not increase by linear optical operations so that it belongs to the family of monotones within the resource theory of nonclassicality. In addition, it is shown that a collection of many small-size nonclassical states cannot achieve a large degree of nonclassicality; this is consistent with the notion of quantum macroscopicity \cite{Leggett80,review2015,Frowis18}. Interestingly, nonclassical resources for displacement sensing can always be converted into a useful resource for phase sensing tasks using linear optical operations. Our Letter provides a concrete operational meaning for the nonclassicality of a continuous-variable state as a potential resource for quantum metrology that can be quantified by a computable measure.

{\it Resource theory of nonclassicality.---} 
We first define a resource theory of nonclassicality based on Ref.~\cite{Bobby17}. Consider a linear optical unitary for the $N$-mode bosonic system belonging to the $O(2N)$ rotation group of the quadratures $\hat{\boldsymbol{R}}:=(\hat{x}_1, \hat{p}_1, \cdots, \hat{x}_N, \hat{p}_N)^T$, in addition to the displacement operation $\hat{D}_n(\alpha_n) = \exp[\alpha_n \hat{a}_n^\dagger - \alpha_n^* \hat{a}_n]$. Such a unitary transforms a multimode bosonic operator $\hat{a}_{\boldsymbol{\mu}}^\dagger := \sum_{n=1}^N \mu_n \hat{a}_n^\dagger$ into $\hat{a}_{\boldsymbol\mu'}^\dagger +  \bigoplus_{n=1}^N \alpha_n {\mathbb 1}_n$, where $\boldsymbol\mu := ({\rm Re}[\mu_1], {\rm Im}[\mu_1], {\rm Re}[\mu_2], {\rm Im}[\mu_2], \cdots, {\rm Re}[\mu_N], {\rm Im}[\mu_N])^T$ is a real $2N$-dimensional unit vector 
 and $\alpha_n \openone_n$ corresponds to the displacement on the $n$th mode.
 Consequently, a mutimode quadrature operator can be defined as $\hat{X}_{\boldsymbol\mu} := (\hat{a}_{\boldsymbol{\mu}} + \hat{a}_{\boldsymbol{\mu}}^\dagger)/\sqrt{2} = \hat{\boldsymbol R}^T \boldsymbol{\mu}$. 
\begin{figure}[t]
\includegraphics[width=0.6\linewidth]{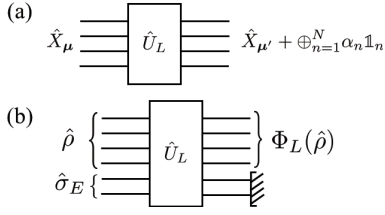}
\caption{(a) Linear optical unitary and (b) linear optical map.}
\label{FIG1}
\end{figure}\
Using linear optical unitary operations, we define a linear optical map
$$\Phi_L({\hat\rho}_A) := \mathrm{Tr}_E [ \hat{U}_L ({\hat\rho}_A \otimes \hat\sigma_E) \hat{U}_L^\dag ],$$ 
where $\hat\sigma_E$ is a classical state (see Fig.~\ref{FIG1}), as a free operation since it  maps every classical state into another classical state. A selective linear operation can be defined by a set of Kraus operators $\{{\hat{K}}_i\}$ when there exists $\hat{U}_L$, classical ancilla $\hat\sigma_{EE'}$, and a set of orthogonal vectors $\{\ket{i}_{E'} \}$ such that $\mathrm{Tr}_E [ \hat{U}_L ({\hat\rho}_A \otimes \hat\sigma_{EE'}) \hat{U}_L^\dag ]= \sum_i p_i {\hat\rho}^i_{A} \otimes \ket{i}_{E'}\bra{i}$, where $p_i {\hat\rho}^i_{A} \coloneqq {\hat{K}}_i{\hat\rho}_A{\hat{K}}^\dag_i $ and $p_i \coloneqq \mathrm{Tr}({\hat{K}}_i{\hat\rho}_A{\hat{K}}^\dag_i)$.
One might expect that a complete set of classicality preserving maps could be expressed in the form of dilations of a linear optical unitary with classical ancilla, but this is not the case. We note that a classicality preserving map $\Lambda: \hat\rho \rightarrow \int \frac{d^2\alpha}{\pi} Q_{\hat\rho}(\alpha) \ket{\alpha} \bra{\alpha}$, where $Q_{\hat\rho}(\alpha) = \bra{\alpha} \hat\rho \ket{\alpha}$ is the Husimi $Q$ function, is not a linear optical map \cite{Diaz18} since it involves a metaplectic unitary corresponding to two-mode squeezing \cite{Palma18}. Nevertheless, a set of linear optical maps serves as an important class of operations that can be relatively easily performed in laboratories, compared to nonlinear operations such as squeezing.

In this framework, nonclassicality for a pure state $\ket{\psi}$ can be quantified by the mean quadrature variance
\begin{equation}
\overline{\cal V}(\ket{\psi}) := \frac{1}{N} \sum_{k=1}^{2N} {\rm Var}(\psi, \hat{R}^{(k)}),
\end{equation}
where ${\rm Var} (\psi, \hat{O}) := \langle \psi | \hat{O}^2 | \psi \rangle -  \langle \psi | \hat{O}| \psi \rangle^2 $ and $\hat{R}^{(k)}$ is the $k$th element of $\hat{\boldsymbol R}$. It is important to note that $\overline{\cal V} \geq 1$, and the equality holds {\it if and only if} the state is a coherent state. We extend this measure to quantify the nonclassicality of a mixed state by taking the convex roof:
\begin{equation}
{\cal Q}(\hat\rho) := \min_{\{p_i, \ket{\psi_i} \}} \sum_i p_i \overline{\cal V}(\ket{\psi_i}) -1,
\end{equation}
where $\{ p_i, \ket{\psi_i} \}$ is a pure-state decomposition of $\hat\rho$.
We show that ${\cal Q}$ is a faithful measure of nonclassicality \cite{Bobby17}.
\begin{theorem}
\label{NCM}
${\cal Q}$ is a nonclassicality measure satisfying the following conditions.
\begin{enumerate}
\item $\mathcal{Q}({\hat\rho}) = 0$ {\it  if and only if }${\hat\rho}$ is classical.

\item
	\begin{enumerate}
	
	\item  (Weak monotonicity) $\mathcal{Q}({\hat\rho})\geq \mathcal{Q}(\Phi_L ({\hat\rho}))$.
	
	\item (Strong monotonicity) $\mathcal{Q}({\hat\rho}) \geq \sum_i p_i\mathcal{Q}({\hat\rho}_i)$ where $p_i \coloneqq \mathrm{Tr}({\hat{K}}^\dag_i{\hat{K}}_i {\hat\rho} )$ and ${\hat\rho}_i \coloneqq ({\hat{K}}_i {\hat\rho} {\hat{K}}^\dag_i)/ p_i$.
	
	\end{enumerate}

\item (Convexity), i.e. $\mathcal{Q}(\sum_i p_i {\hat\rho}_i) \leq \sum_i p_i \mathcal{Q}( {\hat\rho}_i)$.
\end{enumerate}
\end{theorem}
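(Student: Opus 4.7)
My plan is to reduce each item to a single pure-state inequality that rests on three facts: (a) $\overline{\mathcal V}(\psi)\ge 1$ with equality iff $\psi$ is coherent (stated just before the theorem), (b) the sum $\sum_k\mathrm{Var}(\hat R^{(k)})$ is invariant under the orthogonal action of any $\hat U_L$ on the quadrature vector, and (c) Heisenberg uncertainty gives $\mathrm{Var}(\hat x_n)+\mathrm{Var}(\hat p_n)\ge 1$ for each mode. The convex-roof form of $\mathcal Q$ then lifts the inequality from pure to mixed states automatically. In particular, convexity (item~3) is immediate: concatenating the optimal pure decompositions of each $\hat\rho_j$ yields a valid decomposition of $\sum_j\lambda_j\hat\rho_j$ whose convex-roof value is exactly $\sum_j\lambda_j\mathcal Q(\hat\rho_j)$.

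For item~1, if $\hat\rho$ is classical it is a convex mixture of coherent states, each having $\overline{\mathcal V}=1$, so $\mathcal Q(\hat\rho)=0$ (one needs a short discretization/continuity step to fit the possibly continuous $P$-mixture into the convex-roof definition). Conversely, $\mathcal Q(\hat\rho)=0$ forces every $\ket{\psi_i}$ in an optimal decomposition to saturate $\overline{\mathcal V}(\psi_i)\ge 1$, hence to be a coherent state by (a), so $\hat\rho$ is classical.

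The technical heart is the pure-state strong monotonicity: for Kraus operators $\{\hat K_i\}$ arising from $\hat U_L$, a coherent ancilla $\ket{\alpha_0}_{EE'}$, and an orthogonal pointer $\{\ket i_{E'}\}$, acting on pure $\ket\psi$ to produce $\{p_i,\ket{\phi_i}\}$, one has $\sum_ip_i\overline{\mathcal V}(\phi_i)\le\overline{\mathcal V}(\psi)$. I would expand $\ket\Psi:=\hat U_L\ket{\psi,\alpha_0}=\sum_i\sqrt{p_i}\ket{\phi_i}_A\otimes\ket{\xi_i}_E\otimes\ket i_{E'}$; the consistency $\Tr_E\ket\Psi\bra\Psi=\sum_ip_i\ket{\phi_i}\bra{\phi_i}_A\otimes\ket i\bra i_{E'}$ together with $\langle i|i'\rangle=\delta_{ii'}$ forces $\{\ket{\xi_i}\}$ to be orthonormal so as to annihilate the $i\ne i'$ cross-terms. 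Orthogonality then collapses all single-mode first and second moments on $A$, $E$, and $E'$ to convex mixtures, and concavity of the variance in the state gives
\begin{equation*}
\sum_{k=1}^{2N}\mathrm{Var}_\Psi(\hat R^{(k)}_A)\ge N\sum_ip_i\overline{\mathcal V}(\phi_i),
\end{equation*}
while (c) yields $\sum_k\mathrm{Var}_\Psi(\hat R^{(k)}_E)\ge M$ and $\sum_k\mathrm{Var}_\Psi(\hat R^{(k)}_{E'})\ge M'$. Adding these three bounds and using (b) to equate the total with $\sum_k\mathrm{Var}_{\ket{\psi,\alpha_0}}(\hat R^{(k)})=N\overline{\mathcal V}(\psi)+M+M'$ gives the pure-state inequality. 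Weak monotonicity 2(a) then follows as the single-outcome specialization combined with convexity.

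The mixed-state version of 2(b) is obtained by taking the optimal decomposition $\hat\rho=\sum_jq_j\ket{\psi_j}\bra{\psi_j}$, observing that $\hat\rho_i=\sum_j(q_jp_i^j/p_i)\ket{\phi_{ij}}\bra{\phi_{ij}}$ with $p_i^j:=\|\hat K_i\ket{\psi_j}\|^2$ is a valid (generally sub-optimal) pure decomposition, and applying the pure-state inequality for each fixed $j$ to conclude $\sum_ip_i\mathcal Q(\hat\rho_i)\le\sum_jq_j\overline{\mathcal V}(\psi_j)-1=\mathcal Q(\hat\rho)$. The main obstacle I foresee is handling a genuinely mixed classical ancilla $\hat\sigma_{EE'}$: the rank-one Kraus form required by the paper's definition is automatic when the dilation uses a pure coherent ancilla, but for mixed ancillas one must argue that such selective operations decompose as convex mixtures of pure-ancilla selective maps, then invoke convexity of $\mathcal Q$ to transport the inequality through.
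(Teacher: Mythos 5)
Your overall architecture matches the paper's: faithfulness from $\overline{\mathcal V}\geq 1$ with equality only for coherent states, convexity from the convex roof, the pure-state inequality $\sum_i p_i\overline{\mathcal V}(\phi_i)\leq\overline{\mathcal V}(\psi)$ as the technical core, mixed-state strong monotonicity by feeding the optimal decomposition through it, and 2(a) from 2(b) plus convexity. Your proof of the core inequality, however, is a genuinely more elementary route than the paper's: where the paper rewrites $\overline{\mathcal V}$ as a sum of quadrature QFIs and invokes a lemma on the QFI of quantum--classical states, monotonicity of the QFI under partial trace, and the bound that the output ancilla modes carry QFI at least $4M$, you work directly at the level of $\mathrm{Tr}\,\boldsymbol V$, using concavity of the variance on the $A$ block, the per-mode Heisenberg bound $\mathrm{Var}(\hat x)+\mathrm{Var}(\hat p)\geq 1$ on the ancilla blocks, and invariance of the total variance sum under the $O(2N)$ action plus displacements. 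For a \emph{pure coherent} ancilla this is correct and arguably cleaner, since QFI and $4\times$variance coincide on the global pure state anyway.

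The gap is exactly the one you flag, and your proposed repair does not work as stated. The paper's definition allows a \emph{mixed} classical ancilla $\hat\sigma_{EE'}=\sum_j q_j\ket{\boldsymbol\alpha_j}\bra{\boldsymbol\alpha_j}$, and this is where the variance-based chain breaks: unlike the QFI, which is convex and capped at the coherent level ($I_F\leq 2$ per quadrature for any classical state), the quadrature variances of a classical mixture are unbounded, so the input-side total is no longer $N\overline{\mathcal V}(\psi)+M+M'$ and your inequality degrades. Your fix --- ``decompose the selective operation into pure-ancilla selective maps and use convexity of $\mathcal Q$'' --- is not available in general, because for a fixed coherent component $\ket{\boldsymbol\alpha_j}$ the output $\mathrm{Tr}_E[\hat U_L(\ket{\psi}\bra{\psi}\otimes\ket{\boldsymbol\alpha_j}\bra{\boldsymbol\alpha_j})\hat U_L^\dagger]$ need not be block diagonal in $\{\ket{i}_{E'}\}$; only the $j$-average is, so the individual branches are not selective linear operations and there is no well-defined family of branch outcomes to which convexity of $\mathcal Q$ applies. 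The paper avoids the issue by applying convexity of the QFI to the dilated state before any decomposition of the map. If you want to stay within your variance framework, the repair is different from the one you sketch: since $\sum_j q_j\bra{i}_{E'}(\cdot)\ket{i}_{E'}=p_i\ket{\phi_i}\bra{\phi_i}$ is rank one and each term is positive, every branch's $i$-diagonal block must equal $w_{ij}\ket{\phi_i}\bra{\phi_i}$ and the blocks outside $\mathrm{span}\{\ket{i}\}$ vanish; hence the branch-$j$ reduced state on $A$ is $\sum_i w_{ij}\ket{\phi_i}\bra{\phi_i}$, your pure-global-state inequality applies branch by branch, and averaging over $j$ with $\sum_j q_j w_{ij}=p_i$ gives $\sum_i p_i\overline{\mathcal V}(\phi_i)\leq\overline{\mathcal V}(\psi)$. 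With that (or with the paper's QFI-convexity step) your argument is complete.
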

We note that the value of nonclassicality is bounded by ${\cal Q}(\hat\rho) \leq 2 (\bar{n}/N)$, where $\bar{n} := {\rm Tr} \left[ \sum_{n=1}^{N} \hat{a}^\dagger_n \hat{a}_n \hat\rho \right]$ is the mean photon number. The upper bound saturates in the case of pure states 
if and only if $\langle \psi \vert \hat{R}^{(k)} \vert \psi \rangle=0$ for every $k$, a condition which holds for, e.g. Fock states, cat states, or squeezed coherent states. Another interesting point is that $N (\bar{\cal V} -1)$ is equivalent to the phase space macroscopicity measure proposed in Ref.~\cite{Lee11}. Thus, ${\cal Q}$ can be understood as the convex roof extension of the macroscopicity measure per mode.

{\it Nonclassicality and metrological power.---}
We now establish the relationship between the quadrature variance and the displacement  sensitivity. Suppose that we want to estimate the parameter $\theta$ when a quantum state $\hat\rho$ is displaced into $\hat\rho_{\theta, \boldsymbol \mu} = e^{- i\theta \hat{X}_{\boldsymbol \mu}} \hat\rho e^{ i\theta \hat{X}_{\boldsymbol \mu}}$.
In this case, a tight bound for the variance of the estimator $(\Delta \theta)_{\boldsymbol \mu}^2$ by performing the optimal measurements on $\hat\rho_{\theta, \boldsymbol \mu}$ is given by the quantum Cram{\'e}r-Rao bound \cite{Braunstein94}
\begin{equation}
\label{CRB}
(\Delta \theta)^2_{\boldsymbol \mu} \geq \frac{1}{I_F(\hat\rho, \hat{X}_{\boldsymbol \mu})} = \frac{1}{\boldsymbol \mu ^T {\boldsymbol F} \boldsymbol \mu}.
\end{equation}
The QFI can be calculated as $I_F(\hat\rho, \hat{X}_{\boldsymbol \mu}) = 2 \sum_{i,j} \frac{(\lambda_i - \lambda_j)^2}{\lambda_i + \lambda_j}  |\bra{i}\hat{X}_{\boldsymbol \mu}\ket{j}|^2$ by using the eigenvalue decomposition $\hat\rho = \sum_i \lambda_i \ket{i}\bra{i}$, and ${\boldsymbol F}$ is the QFI matrix with real symmetric $2N \times 2N$ elements $F  _{kl} = 2 \sum_{i,j} \frac{(\lambda_i - \lambda_j)^2}{\lambda_i + \lambda_j} \bra{i}\hat{R}^{(k)}\ket{j} \bra{j}\hat{R}^{(l)}\ket{i}$. For a pure state four times the variance is equal to the QFI, so that a large quadrature variance directly implies high displacement sensitivity. The QFI has been also studied to quantify multipartite entanglement \cite{Pezze09, Hyllus12, Toth12} and macroscopic quantum coherence \cite{Frowis12, Yadin16, Frowis17, Frowis18}.

From the observation that ${\boldsymbol F}/2 = {\mathbb 1}$ for every coherent state, $(\Delta \theta)^2_{\boldsymbol \mu}$ is lower bounded by $1/2$ for any classical state, so-called the standard quantum limit (SQL) for displacement metrology. The pure-state nonclassicality measure ${\cal Q}(\ket{\psi}) = {{\rm Tr}[{\boldsymbol F}}]/{(4N)} - 1$ has the meaning of the metrological advancement beyond the SQL, on average over all possible values of $\boldsymbol\mu$. For a mixed state, however, it is unknown if ${\cal Q}$ has a direct operational meaning in terms of quantum metrology, while ${\rm Tr} [{\boldsymbol F}]$ cannot fully capture nonclassicality of a mixed state when some eigenvalues of ${\boldsymbol F}$ are smaller than $2$.

Nonetheless, we shall consider another quantifier of nonclassicality, the ``metrological power", which has the concrete operational meaning of the maximal metrological advantage by performing a linear optical unitary with a vacuum ancilla:
\begin{equation}
\label{MQC}
\begin{aligned}
{\cal M} (\hat\rho) &:=   
\frac{1}{2} \max_{\hat\sigma =\hat{U}_L (\hat\rho \otimes \ket{0}\bra{0}) \hat{U}_L^\dagger}  I_F( \hat\sigma, \hat{x}_1) -1 \\
&= \max \left\{ \frac{ \lambda_{\max} ({\boldsymbol F}) }{2} - 1 , 0 \right\},
\end{aligned}
\end{equation}
where $\lambda_{\max} ({\boldsymbol F})$ is the maximum eigenvalue of $\boldsymbol F$. This quantifies the optimal sensitivity among all possible parametrizations since $\displaystyle \min_{\boldsymbol \mu} (\Delta \theta)^2_{\boldsymbol \mu} =  [\lambda_{\rm max}(\boldsymbol F)]^{-1}$, together with the fact that one can always find a linear optical unitary operator $\hat{U}_L$ such that $\hat{U}_L^\dagger e^{- i \theta \hat{X}_{\boldsymbol\mu}} \hat{U}_L = e^{-i \theta \hat{x}_1}$, and displacement operations do not change the quadrature QFI. We show the following useful properties of ${\cal M}$:

\begin{theorem} The metrological power ${\cal M}$ satisfies the following properties:
\label{QFT}
\begin{enumerate}
\item ${\cal M} \geq 0, $ and ${\cal M}=0$ for every classical state. For a pure state, ${\cal M} = 0$ {\it if and only if} the state is a coherent state.
\item ${\cal M}$ is invariant under linear optical unitaries $\hat{U}_L$ and a monotone under linear optical maps $\Phi_L$,
\item ${\cal M}$ is convex,
\item ${\cal M}(\hat\rho_A \otimes \hat\sigma_B) = \max \left\{ {\cal M}(\hat\rho_A), {\cal M}(\hat\sigma_B) \right\}$
\end{enumerate}
\end{theorem}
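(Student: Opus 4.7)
The plan is to work throughout from the explicit formula $\mathcal{M}(\hat\rho)=\max\{\lambda_{\max}(\boldsymbol{F})/2-1,\,0\}$, using three standard properties of the quantum Fisher information: (i) $I_F(\hat\rho,\hat{X}_{\boldsymbol\mu})$ is convex in $\hat\rho$ for a fixed generator; (ii) it is non-increasing under any CPTP map, in particular under partial trace; and (iii) for a pure state $I_F=4\,\mathrm{Var}$. Together with the variational identity $\lambda_{\max}(\boldsymbol{F})=\max_{|\boldsymbol\mu|=1} I_F(\hat\rho,\hat{X}_{\boldsymbol\mu})$, these reduce the theorem to a series of short steps. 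I would prove items 1, 3, and 4 first and then derive item 2 by reducing a general linear optical map to the operations already handled.

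For item 1, any classical state is a convex mixture of coherent states, each with $I_F(\ket{\boldsymbol\alpha},\hat{X}_{\boldsymbol\mu})=2$; convexity then yields $\boldsymbol\mu^T\boldsymbol{F}\boldsymbol\mu\le 2$ for every unit $\boldsymbol\mu$, so $\mathcal{M}=0$. For a pure state, (iii) reduces $\mathcal{M}(\ket{\psi})=0$ to $\mathrm{Var}(\hat{X}_{\boldsymbol\mu})\le 1/2$ in every direction, i.e.\ a symmetric covariance matrix bounded above by $(1/2)\mathbb{1}$; the Heisenberg inequality applied mode by mode forces equality, and a pure state with covariance matrix $(1/2)\mathbb{1}$ satisfies $(\hat a_k-\langle\hat a_k\rangle)\ket{\psi}=0$ for every $k$, i.e.\ is a coherent state. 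Item 3 is immediate from (i): $\lambda_{\max}(\boldsymbol{F}(\hat\rho))$ is a supremum of convex functions of $\hat\rho$, and $\max(\,\cdot\,,0)$ preserves convexity. For item 4, I would show that the QFI matrix of $\hat\rho_A\otimes\hat\sigma_B$ is block diagonal: the cross entries $F_{kl}$ with $k$ in $A$ and $l$ in $B$ involve matrix elements of $\hat{R}^{(k)}_A\otimes\mathbb{1}_B$ and $\mathbb{1}_A\otimes\hat{R}^{(l)}_B$ between product eigenvectors, which vanish unless both index pairs coincide, in which case the coefficient $(\lambda_i-\lambda_j)^2/(\lambda_i+\lambda_j)$ is zero. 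Then $\lambda_{\max}(\boldsymbol{F}_A\oplus\boldsymbol{F}_B)=\max\{\lambda_{\max}(\boldsymbol{F}_A),\lambda_{\max}(\boldsymbol{F}_B)\}$ gives the equality.

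Item 2 is the main obstacle. I would decompose the linear optical map $\Phi_L(\hat\rho)=\mathrm{Tr}_E[\hat{V}(\hat\rho\otimes\hat\sigma_E)\hat{V}^\dag]$ into three successive operations: tensoring with the classical ancilla $\hat\sigma_E$, conjugation by $\hat{V}$, and partial trace over $E$. Tensoring with a classical ancilla leaves $\mathcal{M}$ unchanged by item 4 combined with $\mathcal{M}(\hat\sigma_E)=0$ from item 1. Conjugation by a linear optical unitary acts on $\hat{\boldsymbol{R}}$ as an orthogonal rotation in $O(2(N+N_E))$ plus a displacement, so $\boldsymbol{F}\mapsto O^T\boldsymbol{F}O$ and its spectrum, hence $\mathcal{M}$, is preserved exactly; this covers the invariance statement for $\hat{U}_L$. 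Finally, property (ii) gives $I_F(\mathrm{Tr}_E[\hat\tau_{AE}],\hat{X}^A_{\boldsymbol\mu})\le I_F(\hat\tau_{AE},\hat{X}^A_{\boldsymbol\mu}\otimes\mathbb{1}_E)$, so maximizing over directions in $A$ is bounded by maximizing over directions in $AE$, yielding $\lambda_{\max}(\boldsymbol{F}_A)\le\lambda_{\max}(\boldsymbol{F}_{AE})$ and hence monotonicity of $\mathcal{M}$ under partial trace. Composing the three steps yields $\mathcal{M}(\Phi_L(\hat\rho))\le\mathcal{M}(\hat\rho)$; the real work is verifying that each step is stated on the correct space and that the classical ancilla can be handled by items 1 and 4 before item 2 is invoked, so there is no circularity.
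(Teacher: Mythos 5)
Your proposal is correct, and its overall skeleton is the same as the paper's: establish the basic properties of the optimal quadrature QFI (classical states give $\boldsymbol\mu^T\boldsymbol F\boldsymbol\mu\le 2$ by convexity, invariance under $\hat U_L$ via an orthogonal rotation of $\boldsymbol\mu$, contractivity under partial trace, and the product property), then obtain monotonicity of ${\cal M}$ under $\Phi_L$ by composing ``append classical ancilla $\to$ unitary $\to$ partial trace,'' exactly the chain used in the Supplemental Material (Proposition~1 plus the proof of Theorem~2). Two sub-arguments differ in a worthwhile way. First, for the pure-state ``${\cal M}=0\Rightarrow$ coherent'' direction you argue from $4\,{\rm Var}(\psi,\hat X_{\boldsymbol\mu})\le 2$ for all $\boldsymbol\mu$, i.e.\ a covariance matrix bounded by $\tfrac12\mathbb 1$, which together with the per-mode Heisenberg bound forces ${\rm Var}(\psi,\hat x_k)={\rm Var}(\psi,\hat p_k)=\tfrac12$, hence $\langle\delta\hat a_k^\dagger\delta\hat a_k\rangle=0$ and $(\hat a_k-\langle\hat a_k\rangle)\ket{\psi}=0$ for every $k$; this is cleaner and more direct than the paper's mode-by-mode Schmidt-decomposition induction in Proposition~1, and it buys you the multimode statement in one stroke. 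Second, you prove item~4 by showing the QFI matrix of $\hat\rho_A\otimes\hat\sigma_B$ is block diagonal (cross terms vanish because the product matrix elements force $i=j$, killing the weight $(\lambda_i-\lambda_j)^2/(\lambda_i+\lambda_j)$), whereas the paper splits the collective quadrature as $\sqrt{\lambda}\,\hat X_{\boldsymbol\mu_A}\otimes\mathbb 1+\sqrt{1-\lambda}\,\mathbb 1\otimes\hat X_{\boldsymbol\mu_B}$ and uses additivity; these are equivalent computations, yours making the $\boldsymbol F_A\oplus\boldsymbol F_B$ structure explicit. One small caveat: you take the closed form ${\cal M}=\max\{\lambda_{\max}(\boldsymbol F)/2-1,0\}$ as the starting point, while the paper's primary definition is the operational first line of Eq.~(4); the equivalence of the two lines itself rests on the $\hat U_L$-covariance and the product property with the vacuum ancilla, which your items~2 and~4 supply, so—as you anticipate in your closing remark—there is no circularity, but a complete write-up should state that reduction explicitly.
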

The first property shows that every pure quantum state except coherent states outperforms all classical states in terms of the metrological power.  For a mixed state, this quantifier can witness nonclassicality whenever ${\cal M}>0$, although there can exist nonclassical states having ${\cal M} = 0$. This is, however, offset by the computational advantages and operational interpretation of ${\cal M}$. The metrological power also satisfies  monotonicity and convexity which are necessary conditions for nonclassicality monotones. The last property fulfills one of the proposed requirements to quantify genuine quantum macroscopicity: the accumulation microscopic quantum coherence should be distinguished from the genuine macroscopic coherence \cite{Leggett80}.

Similar quantum macroscopicity measures for optical systems have been proposed \cite{Oudot15, Frowis17, Volkoff14} based on the QFI, for instance the quantity $\max_{ \{ \phi_n \}} I_F(\hat\rho, \hat{X}_{\{ \phi_n \}}) / N$ using the sum of quadratures $\hat{X}_{\{ \phi_n \}} =  \sum_{n=1}^{N} [\cos\phi_n \hat{x}_n + \sin\phi_n \hat{p}_n ]$. In this case, however, we point out that a linear optical unitary can increase the measure, since $\hat{X}_{\{ \phi_n \}}$ in general does not transform in a covariant way, i.e. $\hat{U}_{L}^\dag \hat{X}_{\{ \phi_n \}} \hat{U}_{L} \neq \hat{X}_{\{ \phi'_n \}}$. Thus, measures of this type do not belong to nonclassicality monotones, although they capture many useful properties of quantum macroscopicity. It is worth noting that utilizing the quadrature QFI to characterize nonclassicality was recently studied with a slightly different set of free operations  \cite{YadinX}.
 
{\it Examples.---}
We first observe that both the Fock state $\ket{n}$ and NOON state $\ket{n}\ket{0} + \ket{0}\ket{n}$ give ${\cal Q} = 2  \bar{n}/N$ and ${\cal M} = 2 \bar{n}$. A cat state $\ket{\alpha} \pm \ket{-\alpha}$ gives ${\cal Q} = 2  \bar{n}$ and ${\cal M} = 2(\bar{n} + |\alpha|^2)$, while a decohered cat state $\hat\rho_{\Gamma} = {N_\Gamma}^{-1} \left[ \ket{\alpha} \bra{\alpha} + \ket{-\alpha} \bra{-\alpha} + \Gamma (\ket{\alpha} \bra{-\alpha} + \ket{-\alpha} \bra{\alpha}) \right]$ gives  ${\cal M}(\hat\rho_\Gamma) = \max\left\{ \frac{16 |\alpha|^2}{N_\Gamma^2} \Gamma (\Gamma + e^{-2|\alpha|^2}), 0 \right\}$, where $N_\Gamma = 2 + 2 \Gamma e^{-2|\alpha|^2}$. A decohered even cat state with positive $\Gamma$ is nonclassical unless $\Gamma = 0$, while a decohered odd cat state with negative $\Gamma$ can be nonclassical when ${\cal M} = 0$. Because of invariance under linear optical unitary operations, nonclassicality between different modes can also be fairly compared throughout our measure. For example, ${\cal M}$ for an entangled coherent state $\ket{\alpha} \ket{\alpha} \pm \ket{-\alpha}  \ket{-\alpha} $ is equivalent to a single-mode cat state with an amplitude $\sqrt{2}\alpha$ since they are interconvertible via a $50:50$ beam splitter. This can be extended to a multimode entangled coherent state $\ket{\alpha_1} \ket{\alpha_2} \cdots \ket{\alpha_N} \pm \ket{-\alpha_1} \ket{-\alpha_2} \cdots \ket{-\alpha_N}$ which is convertible into $( \ket{\gamma} \pm \ket{-\gamma}) \ket{0} \cdots \ket{0}$ via beam splitter operations, where $|\gamma| = \sqrt{\sum_{n=1}^N |\alpha_n|^2}$.

We also apply our result to a multimode Gaussian state characterized by its mean value $\boldsymbol d$ with $d_k = {\rm Tr} \left[ \hat\rho \hat{R}^{(k)} \right]$ and the covariance matrix $\boldsymbol V$ with $V_{kl} = {\rm Tr} \left[ \hat\rho \{ \hat{R}^{(k)} - d_k, \hat{R}^{(l)} - d_l\} \right]$, where $\{\hat{A}, \hat{B}\}:= \hat{A}\hat{B} + \hat{B}\hat{A}$. The symplectic transform of $\boldsymbol{V}$ and corresponding symplectic matrix $\boldsymbol S$ then always exist. This allows us to decompose every Gaussian state into single-mode squeezing combined with linear optical operations acting on the product of thermal states  \cite{Weedbrook12}.
In this case, the following closed form formula is obtained: ${\cal M} = \max \{ \lambda_{\max} \left[\boldsymbol S^{-1} \boldsymbol S^T \boldsymbol V^{-1} \boldsymbol S (\boldsymbol S^{-1})^T \right] - 1 , 0 \}$.
 \begin{figure}[t]
\includegraphics[width=0.95\linewidth]{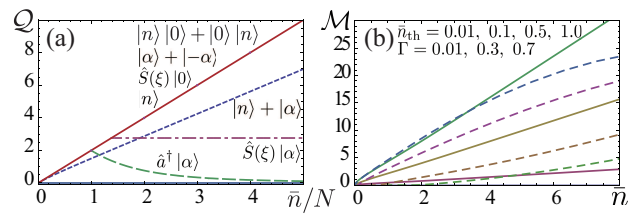}
\caption{(a) Nonclassicality measure ${\cal Q}$ achieves the maximum value (solid line) for NOON,
cat,
squeezed,
and Fock states.
Also, superposition between Fock state and coherent state $\ket{n} + \ket{\alpha}$ with $n = |\alpha|^2$ (dotted line), squeezed coherent states $\hat{S}(\xi)\ket{\alpha}$ for $\xi=1$ (dot-dashed line), and photon-added coherent states $\hat{a}^\dagger \ket{\alpha}$ (dashed line) are evaluated. (b) Metrological power ${\cal M}$ for decohered cat states $\hat{\rho}_{\Gamma}$ (solid lines) and squeezed thermal states $\hat{S}(\xi) \hat\tau \hat{S}^\dagger(\xi)$ (dashed lines) with the parameters $\Gamma = 0.01,~0.3,~0.7$ and $\bar{n}_{\rm th} = 0.01,~0.1,~0.5,~1.0$ (both starting from above).}
\label{FIG3}
\end{figure}
Especially for a single-mode Gaussian state $\hat{D}(\alpha) \hat{S}(\xi) \hat\tau \hat{S}^\dagger(\xi) \hat{D}^\dagger(\alpha)$ with $\hat{S}(\xi) = \exp[(\xi \hat{a}^{\dagger 2} - \xi^* \hat{a}^2)/2]$ and $\hat\tau = \sum_{n=0}^\infty \bar{n}_{\rm th}^n /(1+\bar{n}_{\rm th})^{(n+1)} \ket{n}\bra{n}$, a direct relationship between nonclassicality and squeezing \cite{Idel16} can be derived as ${\cal M} = e^{2G(\boldsymbol{V})} - 1 = {\rm max} \left\{ \exp(2|\xi|)/(2 \bar{n}_{\rm th} +1)-1, 0 \right\}$, where $G(\boldsymbol{V}) := \inf \left[ -\sum_{i=1}^N\log s_i^\downarrow (\boldsymbol{S}) | \boldsymbol{V} \geq \boldsymbol{S^T S} \right]$ with $s_i^\downarrow(\boldsymbol{S})$ being singular values of $\boldsymbol{S}$ in decreasing order. This observation also leads to the following corollary:
\begin{corollary}
The metrological power ${\cal M}$ is zero if and only if a single-mode Gaussian state is classical.
\end{corollary}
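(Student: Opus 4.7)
The plan is to leverage the symplectic canonical decomposition of single-mode Gaussian states together with the explicit formula for $\mathcal{M}$ just derived in the paragraph preceding the corollary. One direction, ``classical $\Rightarrow \mathcal{M}=0$,'' is immediate from Theorem~\ref{QFT}(1), which holds for every classical state. So the content lies in the reverse implication, restricted to the Gaussian case. First, I would write an arbitrary single-mode Gaussian state as $\hat{D}(\alpha)\hat{S}(\xi)\hat\tau\hat{S}^\dagger(\xi)\hat{D}^\dagger(\alpha)$ with $\xi = r e^{i\phi}$. Since a displacement and a phase rotation are linear optical unitaries, both $\mathcal{M}$ (by Theorem~\ref{QFT}(2)) and the classicality of the state (by definition of the free operations) are invariant under them. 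Therefore it suffices to treat the squeezed thermal state $\hat{S}(r)\hat\tau\hat{S}^\dagger(r)$ with $r\geq 0$.

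Second, I would plug into the closed-form expression already established for this family,
\begin{equation*}
\mathcal{M} = \max\!\left\{\frac{e^{2r}}{2\bar{n}_{\rm th}+1}-1,\,0\right\},
\end{equation*}
so that $\mathcal{M}=0$ is equivalent to the single inequality $e^{2r}\leq 2\bar{n}_{\rm th}+1$. Third, I would invoke the standard classicality criterion for Gaussian states: since the $P$ function of a Gaussian state is itself Gaussian with covariance $\boldsymbol{V}-\mathbb{1}$ (in the paper's convention where vacuum has $\boldsymbol{V}=\mathbb{1}$), the state is classical if and only if $\boldsymbol{V}\geq\mathbb{1}$. For the squeezed thermal state the symplectic eigenvalues of $\boldsymbol{V}$ are $(2\bar{n}_{\rm th}+1)e^{\pm 2r}$, so $\boldsymbol{V}\geq\mathbb{1}$ reduces to $(2\bar{n}_{\rm th}+1)e^{-2r}\geq 1$, i.e.\ exactly $e^{2r}\leq 2\bar{n}_{\rm th}+1$. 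The two conditions coincide, establishing the equivalence.

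I would close by remarking that the corollary does \emph{not} extend to non-Gaussian or multimode Gaussian states: the converse ``$\mathcal{M}=0\Rightarrow$ classical'' can fail in general (as the authors already note in the discussion below Theorem~\ref{QFT}), and indeed the decohered odd cat state with small $|\Gamma|$ supplies an explicit non-Gaussian counterexample from the preceding examples section.

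The proof is essentially a bookkeeping argument, so I do not anticipate a serious obstacle. The one place to be careful is the convention for $\boldsymbol{V}$: the paper uses the anticommutator definition $V_{kl}=\mathrm{Tr}[\hat\rho\{\hat{R}^{(k)}-d_k,\hat{R}^{(l)}-d_l\}]$, which puts vacuum at $\boldsymbol{V}=\mathbb{1}$; any sign- or factor-of-two error in translating the classicality inequality $\boldsymbol{V}\geq\mathbb{1}$ or in computing the eigenvalues of the squeezed thermal covariance would break the clean matching of conditions, so I would verify those normalizations explicitly before declaring the two inequalities identical.
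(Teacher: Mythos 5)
Your proof is correct and takes essentially the same route as the paper: both rely on the closed-form expression $\mathcal{M}=\max\{e^{2|\xi|}/(2\bar{n}_{\rm th}+1)-1,\,0\}$ for a single-mode Gaussian state and identify the condition $\mathcal{M}=0$ with the known classicality threshold $e^{2|\xi|}\leq 2\bar{n}_{\rm th}+1$ (equivalently $\boldsymbol{V}\geq\mathbb{1}$, i.e. $|\xi|\leq\tfrac{1}{2}\log(2\bar{n}_{\rm th}+1)$). The only difference is that you spell out the reduction by linear-optical invariance and the positive-$P$ criterion explicitly, which the paper simply cites.
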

\noindent Similar to the case of entangled coherent states, the metrological power of two-mode and single-mode squeezed states can be equivalently compared as they are interconvertible by using the beam splitter. Figure~\ref{FIG3} shows ${\cal Q}$ and ${\cal M}$ for various types of quantum states.

{\it Quantum phase estimation assisted by linear optical unitaries.---}
We discuss how a nonclassical resource for the displacement metrology can be utilized in phase estimation tasks beyond the classical limit. Quantum phase estimation aims to measure the relative phase of a chosen mode of a interferometer whose dynamics is given by $e^{-i \theta \hat{a}^\dagger \hat{a}} \hat\rho e^{i \theta \hat{a}^\dagger \hat{a}}$. The sensitivity of the phase estimation task is bounded by $(\Delta \theta)^2_{\rm phase} \geq {I_F(\hat{\rho}, \hat{a}^\dagger\hat{a})}^{-1}$. It was shown \cite{Rivas10} that  a nonclassical quantum state can be identified whenever the QFI is larger than four times of the mean photon number $I_F(\hat\rho, \hat{a}^\dagger\hat{a}) > 4 {\rm Tr} [\hat\rho \hat{a}^\dagger\hat{a}]$, where the SQL for the phase metrology can be considered as $I_F(\hat\rho, \hat{a}^\dagger\hat{a}) \leq 4 {\rm Tr} [\hat\rho \hat{a}^\dagger\hat{a}]$. Although this condition is useful to witness nonclassicality, we highlight that it is not sufficient to detect every nonclassical pure state. For example, the Fock state $\ket{n}$ is obviously nonclassical for $n > 0$, but $I_F(\ket{n}, \hat{a}^\dagger\hat{a} ) = 0$.

In order to overcome this problem, we optimize the phase sensitivity over linear optical unitaries, analogously to the displacement metrology. However, we should additionally take into account that displacement can increase the phase estimation sensitivity even for classical states as $I_F(\ket{\alpha}, \hat{a}^\dagger\hat{a}) \propto |\alpha|^2$ for a coherent state $\ket{\alpha} = \hat{D}(\alpha)\ket{0}$. It is therefore necessary to characterize the linear optical unitaries according to the degree of displacement. This can be done by decomposing a linear optical unitary into $\hat{U}_L^\alpha := \left[ \bigotimes_{n=1}^N \hat{D}_n(\alpha_n) \right] {\hat{U}}_L^0$ with  $|\alpha|^2= \sum_{n=1}^N |\alpha_n|^2 $, and $\hat{U}_L^0$ is a linear linear optical unitary without any displacement. We then define the {\it $\alpha$-invested metrological power} for phase estimation as
\begin{equation}
{\cal M}_{\rm phase}^\alpha (\hat\rho) := \max_{\hat\sigma = \hat{U}_L^\alpha (\hat\rho \otimes \ket{0}\bra{0})  \hat{U}_L^{\alpha \dagger} }\left[ \frac{I_F(\hat{\sigma}, \hat{a}_1^\dagger \hat{a}_1)}{4} -   {\rm Tr} [\hat\sigma  \hat{a}_1^\dagger \hat{a}_1] \right],
\end{equation}
where ${\cal M}_{\rm phase}^\alpha \geq {\cal M}_{\rm phase}^\beta \geq 0$ for $|\alpha| \geq |\beta|$ and ${\cal M}_{\rm phase}^\alpha = 0$ for every classical state, thus the phase sensitivity beyond the SQL (${\cal M}_{\rm phase}^\alpha >0$) directly captures the negativity in the $P$ distribution. Additionally, ${\cal M}_{\rm phase}^\alpha$ enjoys convexity and is invariant under 
$\hat{U}_L^0$. We demonstrate a remarkable relationship between the displacement and phase metrological powers.

\begin{figure}[t]
\includegraphics[width=0.65\linewidth]{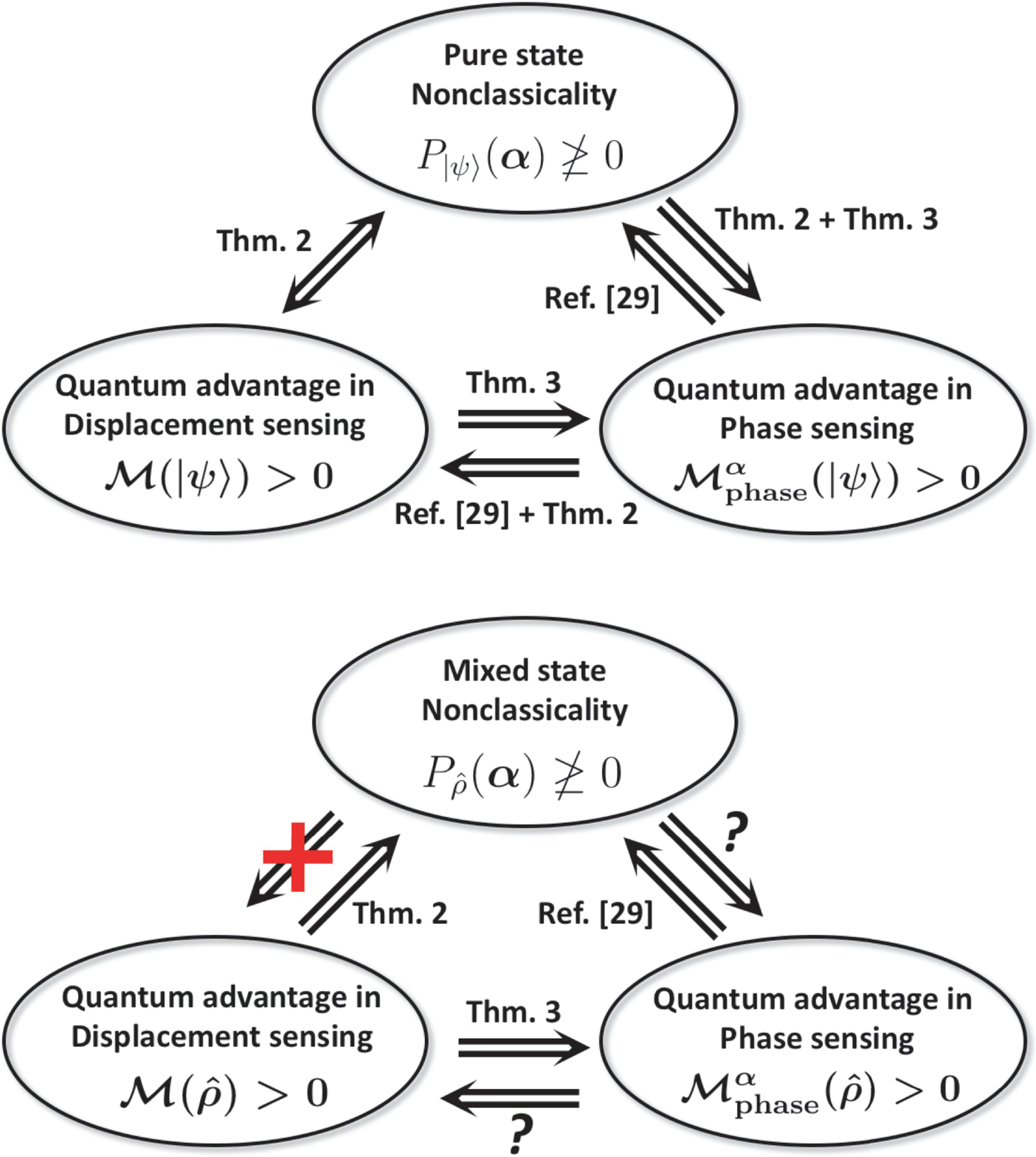}
\caption{Relationship between nonclassical resources for the displacement and phase estimation tasks.}
\label{FIGNEW}
\end{figure}

\begin{theorem} Provided ${\cal M}(\hat\rho) >0$, there exists a linear optical unitary $\hat{U}_L^\alpha$ to reach the sensitivity beyond the SQL for phase estimation, i.e., ${\cal M}_{\rm phase}^\alpha(\hat\rho) >0$.
\label{SQL}
\end{theorem}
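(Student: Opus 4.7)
The plan is to (i) use a displacement-free linear optical unitary $\hat{U}_L^0$ to rotate the optimal QFI quadrature direction onto $\hat{x}_1$, then (ii) apply a displacement $\hat{D}_1(r)$ on the first mode with large real $r>0$, so that phase estimation around the displaced coherent amplitude reduces, at leading order in $r$, to displacement estimation along the optimal direction. Since $\mathcal{M}(\hat\rho)>0$ is equivalent to $\lambda_{\max}(\boldsymbol F)>2$, and the vacuum ancilla contributes a QFI block $2\mathbb{1}$ that does not raise this maximum, the transitive action of linear optical unitaries on unit complex vectors in $\mathbb{C}^{N+1}$ produces $\hat{U}_L^0$ such that $\hat\sigma_0 := \hat{U}_L^0(\hat\rho\otimes\ket{0}\bra{0})\hat{U}_L^{0\dagger}$ satisfies $I_F(\hat\sigma_0,\hat{x}_1)=\lambda_{\max}(\boldsymbol F)=2(\mathcal{M}(\hat\rho)+1)$.

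Setting $\hat{U}_L^\alpha := \hat{D}_1(r)\hat{U}_L^0$ and $\hat\sigma':=\hat{D}_1(r)\hat\sigma_0\hat{D}_1^\dagger(r)$, I use $\hat{D}_1^\dagger(r)\hat{a}_1^\dagger\hat{a}_1\hat{D}_1(r)=\hat{a}_1^\dagger\hat{a}_1+r\sqrt{2}\,\hat{x}_1+r^2$ together with the unitary and constant-shift invariance of the QFI to obtain
\begin{equation*}
I_F(\hat\sigma',\hat{a}_1^\dagger\hat{a}_1)=I_F(\hat\sigma_0,\hat{a}_1^\dagger\hat{a}_1+r\sqrt{2}\,\hat{x}_1).
\end{equation*}
The spectral expression for $I_F$ given in the main text makes $I_F(\hat\sigma_0,\cdot)$ a quadratic form on Hermitian generators, so I can expand the right-hand side as $I_F(\hat\sigma_0,\hat{a}_1^\dagger\hat{a}_1)+2r^2 I_F(\hat\sigma_0,\hat{x}_1)+2r\sqrt{2}\,B$, where $B$ is an $r$-independent bilinear cross term determined by $\hat\sigma_0$. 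Combining this with $\langle\hat{a}_1^\dagger\hat{a}_1\rangle_{\hat\sigma'}=\langle\hat{a}_1^\dagger\hat{a}_1\rangle_{\hat\sigma_0}+r\sqrt{2}\,\langle\hat{x}_1\rangle_{\hat\sigma_0}+r^2$ and using $I_F(\hat\sigma_0,\hat{x}_1)/2-1=\mathcal{M}(\hat\rho)$ from the first step, I arrive at
\begin{equation*}
\frac{I_F(\hat\sigma',\hat{a}_1^\dagger\hat{a}_1)}{4}-\langle\hat{a}_1^\dagger\hat{a}_1\rangle_{\hat\sigma'}=r^2\,\mathcal{M}(\hat\rho)+O(r).
\end{equation*}

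The leading $r^2$ coefficient is strictly positive by hypothesis, while the $O(r)$ and $O(1)$ remainders depend only on the fixed state $\hat\sigma_0$; therefore the right-hand side is strictly positive for all sufficiently large $r$, yielding $\mathcal{M}_{\rm phase}^\alpha(\hat\rho)>0$ for $\alpha=(r,0,\dots,0)$. The main obstacle I anticipate is twofold: first, confirming that the quadrature rotation in step~(i) is genuinely implementable by beam splitters and phase shifters rather than by arbitrary $O(2N)$ transformations, which I resolve through the transitivity of $U(N+1)$ on unit vectors in $\mathbb{C}^{N+1}$; second, ensuring that the cross term $B$ stays bounded and $r$-independent, which follows because both $\hat{a}_1^\dagger\hat{a}_1$ and $\hat{x}_1$ are fixed operators on the rotated Hilbert space and $\hat\sigma_0$ is chosen before $r$ is tuned.
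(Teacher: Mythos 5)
Your proposal is correct and takes essentially the same approach as the paper: align the optimal quadrature with $\hat{x}_1$ via a displacement-free linear optical unitary, apply a large displacement on the first mode, and observe that $\tfrac{1}{4}I_F - \langle \hat{a}_1^\dagger\hat{a}_1\rangle$ grows as $|\alpha|^2\,{\cal M}(\hat\rho)$ plus terms of lower order in $|\alpha|$, which is positive for sufficiently large displacement. The only difference is bookkeeping: you expand the QFI exactly as a quadratic form in the generator (keeping the cross term $B$), while the paper controls the same cross terms through its Lemma~\ref{QFIBound} inequality $\left|\sqrt{I_F(\hat\rho,\hat{A})}-\sqrt{I_F(\hat\rho,\hat{B})}\right|\leq\sqrt{I_F(\hat\rho,\hat{A}+\hat{B})}$ together with an operator bound on the displaced photon number, and both routes (under the same implicit finiteness assumptions on $\bar{n}$ and the number QFI) give the identical leading coefficient.
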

\noindent In particular,  ${\cal M} \leq {\displaystyle \lim_{|\alpha| \rightarrow \infty} [{{\cal M}_{\rm phase}^\alpha} /{|\alpha|^2}] }\leq {\cal M} +1$ (\cite{note1}, see also the Supplemental Material \cite{Suppl}),
which can be intuitively understood by the fact that a large displacement followed by a small rotation can be approximated by two sequential displacement operations in orthogonal directions. Another important figure of merit for phase metrology is the scaling behavior with the mean photon number $\bar{n}$. In phase estimation, the classical limit with coherent states is given by $\Delta \theta_{\rm cl} \propto 1/\sqrt{\bar{n}}$, while quantum states can achieve the sensitivity of $\Delta \theta_{\rm HS} \propto 1/\bar{n}$, referred as Heisenberg-like scaling (HS) \cite{Giovannetti11}. In order to reach HS, the corresponding QFI should scale quadratically with $\bar{n}$.
The following Theorem demonstrates that high nonclassicality in displacement sensing is sufficient to achieve HS.
\begin{theorem} If ${\cal M}(\hat\rho) \propto \bar{n}_{\hat\rho}$,
there exists $\hat{\sigma}=\hat{U}_{L}^{\alpha}(\hat\rho \otimes \vert 0 \rangle \langle 0 \vert ) \hat{U}_{L}^{\alpha \dagger}$ such that $I_{F}( \hat{\sigma},\hat{a}_{1}^{\dagger}\hat{a}_{1}) \propto \bar{n}_{\hat\sigma}^2$.  More precisely, HS can be achieved if and only if ${\cal M}(\hat\rho_0) \propto \bar{n}_{\hat\rho_0}$ or ${\cal M}_{\rm phase}^0 (\hat\rho_0) \propto \bar{n}_{\hat\rho_0}^k$ with $k\geq 2$, where $\hat\rho_0= \hat{V}_L \hat\rho \hat{V}_L^\dagger$ is the state centered in phase space (${\rm Tr} \hat\rho_0 \hat{\boldsymbol R} = 0$) by acting the linear optical unitary $\hat{V}_L$ on $\hat\rho$. Here,  $\bar{n}_{\hat\sigma}$ is the mean photon number of a quantum state $\hat\sigma$.
\label{HL}
\end{theorem}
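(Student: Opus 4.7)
My plan is to split Theorem \ref{HL} into the sufficient direction (the first sentence plus the ``if'' of the equivalence) and the necessary direction (the ``only if''). Throughout I would work with the centered state $\hat\rho_0=\hat{V}_L\hat\rho\hat{V}_L^\dagger$: since $\hat{V}_L$ is a free linear optical unitary, ${\cal M}$, ${\cal M}_{\rm phase}^0$, and the achievable class of $\hat\sigma$ are all invariant, while the displacement part of $\hat{V}_L$ can be absorbed into the outer $\hat{D}_\alpha$ when restoring $\hat\rho$.

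For the sufficient direction, if ${\cal M}(\hat\rho_0)=c\,\bar n_{\hat\rho_0}$ with $c>0$, I would invoke the bound $\lim_{|\alpha|\to\infty}{\cal M}_{\rm phase}^\alpha(\hat\rho_0)/|\alpha|^2\geq {\cal M}(\hat\rho_0)$ stated just before the theorem, so that for $|\alpha|^2\propto \bar n_{\hat\rho_0}$ (taken sufficiently large) one has ${\cal M}_{\rm phase}^\alpha\gtrsim c\,\bar n_{\hat\rho_0}^2$. Combined with the defining identity $I_F(\hat\sigma,\hat{a}_1^\dagger\hat{a}_1)/4 = {\cal M}_{\rm phase}^\alpha+\bar n_{\hat\sigma}$ at the maximizing $\hat\sigma$ and $\bar n_{\hat\sigma}=(1+\tau)\bar n_{\hat\rho_0}$, this yields $I_F\propto \bar n_{\hat\sigma}^2$, which is HS. For the alternative hypothesis ${\cal M}_{\rm phase}^0(\hat\rho_0)\propto \bar n_{\hat\rho_0}^k$ with $k\geq 2$, no displacement is needed: taking the optimal $\hat{U}_L^0$ (which is photon-number preserving, so $\bar n_{\hat\sigma}=\bar n_{\hat\rho_0}$) gives $I_F/4 = {\cal M}_{\rm phase}^0+\bar n_{\hat\rho_0}\propto \bar n_{\hat\sigma}^k\geq \bar n_{\hat\sigma}^2$.

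For the necessary direction, the main tool is the triangle inequality for the quantum Fisher information, $\sqrt{I_F(\hat\tau,\hat{A}+\hat{B})}\leq \sqrt{I_F(\hat\tau,\hat{A})}+\sqrt{I_F(\hat\tau,\hat{B})}$, which reduces to Cauchy--Schwarz on covariances of the SLDs. Decomposing $\hat{U}_L^\alpha=\hat{D}_\alpha\hat{U}_L^0$ and using $\hat{D}_\alpha^\dagger\hat{a}_1^\dagger\hat{a}_1\hat{D}_\alpha = \hat{a}_1^\dagger\hat{a}_1 + \alpha\hat{a}_1^\dagger + \alpha^*\hat{a}_1 + |\alpha|^2$ together with the covariance of the QFI under unitary conjugation (and its invariance under adding scalars to the generator), I would obtain
\begin{equation*}
\sqrt{\tfrac{1}{4}I_F(\hat\sigma,\hat{a}_1^\dagger\hat{a}_1)} \leq \sqrt{{\cal M}_{\rm phase}^0(\hat\rho_0)+\bar n_{\hat\rho_0}} + |\alpha|\sqrt{{\cal M}(\hat\rho_0)+1},
\end{equation*}
where the quadrature bound $I_F(\hat\tau,\hat{X}_{\boldsymbol\mu})\leq \lambda_{\max}(\boldsymbol{F})=2({\cal M}(\hat\rho_0)+1)$ absorbs the displacement-generated term and the definition of ${\cal M}_{\rm phase}^0$ handles the number-operator term. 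Squaring and distributing the cross term via AM--GM gives, up to constants, $I_F/4\lesssim {\cal M}_{\rm phase}^0+\bar n_{\hat\rho_0}+|\alpha|^2({\cal M}+1)$. Comparing with the HS demand $I_F/4\propto \bar n_{\hat\sigma}^2=(|\alpha|^2+\bar n_{\hat\rho_0})^2$, the regime $|\alpha|^2\gg \bar n_{\hat\rho_0}$ is excluded by the a priori bound ${\cal M}\leq 2\bar n_{\hat\rho_0}$ inherited from ${\cal Q}\leq 2\bar n/N$ (otherwise one would need ${\cal M}\gtrsim |\alpha|^2\gg \bar n_{\hat\rho_0}$), and the remaining regime $|\alpha|^2=O(\bar n_{\hat\rho_0})$ forces either ${\cal M}\propto \bar n_{\hat\rho_0}$ (when the $|\alpha|^2({\cal M}+1)$ term must supply the $\bar n_{\hat\rho_0}^2$ scaling) or ${\cal M}_{\rm phase}^0\propto \bar n_{\hat\rho_0}^k$ with $k\geq 2$ (otherwise).

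The hardest step will be executing this case split rigorously, since the cross term $2|\alpha|\sqrt{({\cal M}_{\rm phase}^0+\bar n_{\hat\rho_0})({\cal M}+1)}$ interlaces the two candidate sources of HS and a naive symmetric AM--GM can artificially inflate the individual lower bounds it produces. I expect to use weighted AM--GM tuned to the regime of $|\alpha|^2/\bar n_{\hat\rho_0}$, so that the cross term is attributed primarily to the dominant contribution and the clean dichotomy is recovered. A secondary subtlety is ensuring the asymptotic bound ${\cal M}_{\rm phase}^\alpha\gtrsim {\cal M}\,|\alpha|^2$ used in the sufficient direction is uniform across the family of states with growing $\bar n_{\hat\rho_0}$, which should follow by tracking the residual $O(|\alpha|)$ term in the previously established corollary, and reconstituting the non-centered $\hat\rho$ in the first sentence by absorbing $\hat{V}_L$'s displacement piece into $\hat{D}_\alpha$.
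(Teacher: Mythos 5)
Your plan follows essentially the same route as the paper's proof: sufficiency by choosing $|\alpha|^2 \propto \bar{n}_{\hat\rho_0}$ and using the lower bound ${\cal M}_{\rm phase}^\alpha \gtrsim |\alpha|^2 {\cal M} - O(|\alpha|)$ (which is just the lower half of the paper's Proposition~2, proved from the same QFI triangle inequality), plus the trivial $k\geq 2$ branch, and necessity by the QFI triangle inequality applied to $\hat{D}_\alpha^\dagger \hat{a}_1^\dagger\hat{a}_1\hat{D}_\alpha$, which reproduces the upper half of Proposition~2, followed by the same exponent bookkeeping in $k$, $l$, and the scaling of $|\alpha|^2$. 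Two small slips do not affect the scaling argument but should be fixed in execution: the term subtracted in ${\cal M}_{\rm phase}^\alpha$ is the mode-1 photon number, not $\bar{n}_{\hat\sigma}$, so you only get the inequality $I_F/4 \geq {\cal M}_{\rm phase}^\alpha$ (which suffices), and the a priori bound should be ${\cal M}(\hat\rho_0) = O(\bar{n}_{\hat\rho_0})$ (e.g.\ via $\lambda_{\max}(\boldsymbol F) \leq {\rm Tr}\,\boldsymbol F \leq 4(2\bar{n}_{\hat\rho_0}+N)$) rather than ${\cal M} \leq 2\bar{n}_{\hat\rho_0}$ "inherited from ${\cal Q}$," which fails already for cat states where ${\cal M} \approx 4\bar{n}$.
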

\noindent We note that the Fock state and cat state cannot reach HS via only linear interferometers without additional displacement. However, Theorem~\ref{HL} guarantees that an appropriate displacement operation will allow the system to reach HS. According to Theorems~\ref{SQL} and \ref{HL}, a nonclassical resource for displacement sensing always implies the quantum enhancement of phase sensing. However, it is unclear at this point whether {\it1. negative $P$ function of a mixed state always implies quantum enhancement in phase sensing} and whether {\it 2. nonclassical phase sensing implies nonclassical displacement sensing} (see Fig.~\ref{FIGNEW}). These two statements are incompatible thus both cannot be true at the same time, but both can be false.

{\it Remarks.---}
We have identified nonclassicality of continuous-variable states as a quantifiable resource for quantum metrology. We have shown that any pure state with negativity in the $P$ function provides metrological enhancement over all classical states in displacement estimation tasks, and so does every single-mode Gaussian state. This metrological power is found to be a measure of nonclassicality based on a quantum resource theory that does not increase under linear optical elements. It is demonstrated that every state displaying metrological enhancement in displacement sensing can be converted into nonclassical phase sensitivity by utilizing a linear optical unitary.

The metrological power also satisfies the necessary conditions for a valid measure of quantum macroscopicity. Our study provides a possible avenue to a unified understanding of nonclassicality, quantum macroscopicity, and the metrological usefulness in the framework of the quantum resource theory. Our measures could possibly be applied not only to multimode bosonic systems, but also to other many-body systems including spin, atomic, and optomechanical systems. In these systems, a more general notion of coherent states \cite{Radcliffe71,Drummond16} and the metrological usefulness with nonclassical states \cite{GB2015,FB2017} can be considered. This may lead to a unified description of nonclassicality for both discrete and continuous systems.

\begin{acknowledgments}
{\it Acknowledgments.---}
The authors thank Jean-Michel Raimond, Serge Haroche, Benjamin Yadin, and Changhun Oh for helpful discussions.
This work was supported by a National Research Foundation of Korea grant funded by the Korea government (Grant No. 2010-0018295) and by the Korea Institute of Science and Technology Institutional Program (Project No. 2E27800-18-P043). K.C.T and T.J.V. were supported by Korea Research Fellowship Program through the National Research Foundation of Korea (NRF) funded by the Ministry of Science and ICT (Grants No. 2016H1D3A1938100 and 2016H1D3A1908876).
\end{acknowledgments}

\newpage
\widetext
\section{Supplemental Material}
\appendix
\section{Preliminaries}
For notational simplicity, we first define the optimal and mean values of quantum Fisher information (QFI) with respect to the multimode quadrature operator as follows:
\begin{equation}
\label{QF1}
{\cal I}_{\rm opt} (\hat\rho) := \frac{1}{2} \max_{\boldsymbol\mu \in {\cal S}} I_F(\hat\rho, \hat{X}_{\boldsymbol\mu}) = \frac{ \lambda_{\max} ({\boldsymbol F}) }{2}
\end{equation}
and
\begin{equation}
\label{QF2}
{\cal I}_{\rm mean}(\hat\rho) := \frac{1}{2} \int_{\cal S} \frac{d^{2N} \boldsymbol\mu}{{\rm Vol}({\cal S})} I_F(\hat\rho, \hat{X}_{\boldsymbol\mu}) = \frac{ {\rm Tr}{(\boldsymbol F})}{4N},
\end{equation}
where ${\cal S} = \{ \boldsymbol\mu | \sum_{n=1}^{N} |\mu_n|^2 = 1 \}$ and $\lambda_{\max} ({\boldsymbol F})$ is the maximum eigenvalue of $\boldsymbol F$. In the case of a pure state, the mean quadrature QFI coincides with the mean quadrature variance, i.e. $$\overline{\cal V} (\ket{\psi}) = \frac{1}{N} \sum_{k=1}^{2N} {\rm Var}(\psi, \hat{R}^{(k)}) = {\cal I}_{\rm mean}(\ket{\psi}),$$
where $\hat{R}^{(k)}$ is the $k$th element of $\hat{\boldsymbol R} := (\hat{x}_1, \hat{p}_1, \cdots, \hat{x}_N, \hat{p}_N)^T$ and $\hat{X}_{\boldsymbol \mu} = \hat{\boldsymbol R}^T {\boldsymbol \mu}$. For mixed states, the metrological power is given by $${\cal M}(\hat\rho) = {\cal I}_{\rm opt} (\hat\rho \otimes \ket{0}\bra{0}) -1 ={\rm max} \{{\cal I}_{\rm opt} (\hat\rho) -1, 0 \}.$$ Then we prove the following:

\begin{proposition} [Quadrature QFI of a continuous-variable system] Both ${\cal I}_{\rm opt}$ and ${\cal I}_{\rm mean}$ satisfy the following properties:
\label{LOF}
~
\begin{enumerate}
\item For a pure state $\ket{\psi}$, ${\cal I}(\ket{\psi}) \geq 1$, where equality holds {\it if and only if} $\ket{\psi}$ is a coherent state $\ket{\boldsymbol\alpha}$. For a mixed state $\hat\rho$, ${\cal I}(\hat\rho) \leq 1$ when $\hat\rho$ is classical. 

\item Linear optical unitaries do not change the quadrature QFI, i.e. 
${\cal I}(\hat{U}_{L} \hat\rho \hat{U}_{L}^{\dagger}) = {\cal I} (\hat\rho)$.

\item Combining two optical fields $A$ and $B$ cannot increase the overall quadrature QFI, i.e. ${\cal I} ( \hat{U}_L (\hat\rho_A \otimes \hat\sigma_B) \hat{U}_L^\dagger) \leq \max \{ {\cal I}(\hat\rho_A), {\cal I}(\hat\sigma_B) \}$, regardless of the number of modes in each field.
\end{enumerate}
\end{proposition}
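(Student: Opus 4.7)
My plan is to dispatch the three properties in sequence, each following from a standard fact about the QFI combined with the structure of quadrature operators. The key ingredients are the pure-state identity $I_F(\ket\psi,\hat O)=4\,\mathrm{Var}(\psi,\hat O)$, convexity of the QFI in the state, invariance of the QFI under unitary conjugation and under constant shifts of the generator, and additivity of the QFI for product states under additive generators.

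For (1), the pure-state identity gives $\mathcal I_{\mathrm{mean}}(\ket\psi)=\overline{\mathcal V}(\ket\psi)$ and $\mathcal I_{\mathrm{opt}}(\ket\psi)=2\max_{|\boldsymbol\mu|=1}\mathrm{Var}(\psi,\hat X_{\boldsymbol\mu})$. The single-mode Heisenberg relation combined with AM--GM yields $\mathrm{Var}(\hat x_n)+\mathrm{Var}(\hat p_n)\geq 1$ in every mode, hence $\overline{\mathcal V}(\ket\psi)\geq 1$; since the maximum directional variance is at least the average over the sphere, $\mathcal I_{\mathrm{opt}}(\ket\psi)\geq\overline{\mathcal V}(\ket\psi)\geq 1$. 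Equality forces the mode-wise saturations $\mathrm{Var}(\hat x_n)=\mathrm{Var}(\hat p_n)=1/2$, and the identity $\langle(\hat a_n-\langle\hat a_n\rangle)^\dagger(\hat a_n-\langle\hat a_n\rangle)\rangle=\tfrac{1}{2}[\mathrm{Var}(\hat x_n)+\mathrm{Var}(\hat p_n)-1]=0$ shows that each reduced state $\rho_n$ is annihilated by $\hat a_n-\langle\hat a_n\rangle$, hence is the pure coherent state $\ket{\langle\hat a_n\rangle}$; since every reduced state of $\ket\psi$ is then pure, the global state must factorize into a multimode coherent state. For a classical mixed state $\hat\rho=\int P(\boldsymbol\alpha)\ket{\boldsymbol\alpha}\bra{\boldsymbol\alpha}\,d^{2N}\boldsymbol\alpha$ with $P\geq 0$, convexity of the QFI yields $I_F(\hat\rho,\hat X_{\boldsymbol\mu})\leq\int P(\boldsymbol\alpha)\,I_F(\ket{\boldsymbol\alpha},\hat X_{\boldsymbol\mu})\,d^{2N}\boldsymbol\alpha=2$ (using $\mathrm{Var}(\ket{\boldsymbol\alpha},\hat X_{\boldsymbol\mu})=1/2$), so $\mathcal I_{\mathrm{opt}}(\hat\rho),\mathcal I_{\mathrm{mean}}(\hat\rho)\leq 1$.

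For (2), any $\hat U_L$ implements an affine transformation of the quadratures, $\hat U_L^\dagger\hat R^{(k)}\hat U_L=\sum_l O_{kl}\,\hat R^{(l)}+c_k$ with $O\in O(2N)$; invariance of the QFI under unitary conjugation of the state and under constant shifts of the generator then gives $I_F(\hat U_L\hat\rho\hat U_L^\dagger,\hat X_{\boldsymbol\mu})=I_F(\hat\rho,\hat X_{O\boldsymbol\mu})$. Since $O$ preserves both the unit sphere $\mathcal S$ and its uniform measure, invariance of both $\mathcal I_{\mathrm{opt}}$ and $\mathcal I_{\mathrm{mean}}$ follows. For (3), property (2) reduces the task to bounding $\mathcal I(\hat\rho_A\otimes\hat\sigma_B)$; splitting $\boldsymbol\mu=(\boldsymbol\mu_A,\boldsymbol\mu_B)$ with $|\boldsymbol\mu_A|^2+|\boldsymbol\mu_B|^2=1$, additivity of the QFI for product states under additive generators gives $I_F(\hat\rho_A\otimes\hat\sigma_B,\hat X_{\boldsymbol\mu})=I_F(\hat\rho_A,\hat X_{\boldsymbol\mu_A})+I_F(\hat\sigma_B,\hat X_{\boldsymbol\mu_B})$, so the QFI matrix is block-diagonal, $\boldsymbol F=\boldsymbol F_A\oplus\boldsymbol F_B$. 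Then $\lambda_{\max}(\boldsymbol F)=\max\{\lambda_{\max}(\boldsymbol F_A),\lambda_{\max}(\boldsymbol F_B)\}$ settles $\mathcal I_{\mathrm{opt}}$, while $\mathrm{Tr}(\boldsymbol F)/[4(N_A+N_B)]$ is a convex combination of $\mathcal I_{\mathrm{mean}}(\hat\rho_A)$ and $\mathcal I_{\mathrm{mean}}(\hat\sigma_B)$ with weights $N_A/(N_A+N_B)$ and $N_B/(N_A+N_B)$, hence bounded by the maximum.

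The main obstacle I anticipate is the ``only if'' direction in (1): promoting mode-wise saturation of Heisenberg to a global tensor factorization of $\ket\psi$ into coherent states. The route above passes through the normally-ordered variance $\langle\hat a_n^\dagger\hat a_n\rangle-|\langle\hat a_n\rangle|^2=0$, which compactly forces each reduced state to be the pure coherent state $\ket{\langle\hat a_n\rangle}$; global purity of $\ket\psi$ combined with purity of every marginal then does the rest.
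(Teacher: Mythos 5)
Your proposal is correct, and for Properties 2 and 3 it is essentially the paper's own argument: linear optical unitaries act on the quadratures as an orthogonal rotation plus a constant shift, neither of which changes the QFI or the sphere ${\cal S}$, and for a product state the QFI matrix is block diagonal, $\boldsymbol F = \boldsymbol F_A \oplus \boldsymbol F_B$ (your block-diagonality statement is exactly the convex-combination decomposition $\lambda I_F(\hat\rho_A,\hat X_{\boldsymbol\mu_A}) + (1-\lambda) I_F(\hat\sigma_B,\hat X_{\boldsymbol\mu_B})$ the paper writes out). The genuine difference is the ``only if'' direction of Property 1. The paper Schmidt-decomposes $\ket{\psi}$ between one mode and the rest, shows saturation forces every conditional state to be a coherent state with the same quadrature means (a coherent state being uniquely fixed by $\langle \hat{x}\rangle,\langle\hat{p}\rangle$), and peels off one mode at a time. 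You instead work with the single-mode marginals: saturation ${\rm Var}(\hat{x}_n)+{\rm Var}(\hat{p}_n)=1$ is equivalent to vanishing of the normally ordered variance $\langle(\hat{a}_n-\langle\hat{a}_n\rangle)^\dagger(\hat{a}_n-\langle\hat{a}_n\rangle)\rangle=0$, so the support of each reduced state is annihilated by $\hat{a}_n-\langle\hat{a}_n\rangle$, forcing the marginal to be the pure coherent state $\ket{\langle\hat{a}_n\rangle}$; purity of every marginal of a pure global state then yields the tensor factorization. This route is a bit more economical (no Schmidt decomposition, no uniqueness-by-means step) and makes clear why any mixedness of a marginal spoils saturation, while the paper's route produces the product structure explicitly mode by mode. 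Two minor remarks: your reduction of the $\mathcal I_{\rm opt}$ equality case through $\mathcal I_{\rm opt}\geq \overline{\mathcal V}\geq 1$ (maximum of $\boldsymbol\mu^T\boldsymbol F\boldsymbol\mu$ over the sphere versus its average) is needed and correctly stated; and in the classical-state bound the $P$ function is in general a singular positive measure, so convexity of the QFI is applied to a continuous mixture, which is at the same level of rigor as the paper's discrete convex sum.
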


\begin{proof}
We first show that ${\cal I}(\ket{\psi}) \geq 1$.
Taking a collective quadrature observable $\hat{X}_{\boldsymbol\mu}$ and its conjugate operator $\hat{P}_{\boldsymbol\mu} = (\hat{a}_{\boldsymbol \mu} - \hat{a}_{\boldsymbol \mu}^\dagger )/(\sqrt{2} i) = \hat{X}_{ \tilde{\boldsymbol\mu}}$, with $\tilde{\mu}_n = -i \mu_n$ for every $n$. Then, we note that $[\hat{X}_{\boldsymbol\mu}, \hat{P}_{\boldsymbol\mu}]= i$. Then we have
$$
\begin{aligned}
{\rm Var}(\psi, \hat{X}_{\boldsymbol\mu}) + {\rm Var}(\psi, \hat{P}_{\boldsymbol\mu}) 
\geq 2 \sqrt{{\rm Var}(\psi, \hat{X}_{\boldsymbol\mu}) {\rm Var}(\psi, \hat{P}_{\boldsymbol\mu})} 
\geq  | \langle [\hat{X}_{\boldsymbol\mu}, \hat{P}_{\boldsymbol\mu}] \rangle| = 1,
\end{aligned}
$$
where the second inequality is the Heisenberg uncertainty relation.
Using the fact that $I_F(\ket{\psi}, \hat{X}_{\boldsymbol\mu}) = 4 {\rm Var}(\psi, \hat{X}_{\boldsymbol\mu})$,
$$
\begin{aligned}
{\cal I}_{\rm opt}(\ket{\psi}) &= \frac{1}{2} \max_{\boldsymbol\mu \in {\cal S}} I_F(\ket{\psi}, \hat{X}_{\boldsymbol\mu}) \geq \frac{1}{4} \left[  I_F(\ket{\psi}, \hat{X}_{\boldsymbol\mu})  +  I_F(\ket{\psi}, \hat{P}_{\boldsymbol\mu}) \right] = {\rm Var}(\psi, \hat{X}_{\boldsymbol\mu}) + {\rm Var}(\psi, \hat{P}_{\boldsymbol\mu})  \geq 1,
\end{aligned}
$$
and
$$
\begin{aligned}
{\cal I}_{\rm mean} (\ket{\psi})  &= \frac{1}{2} \int_{\cal S} \frac{d^{2N} \boldsymbol\mu}{{\rm Vol}({\cal S})} I_F(\ket{\psi}, \hat{X}_{\boldsymbol\mu}) \\
& = \frac{1}{4} \left[ \int_{\cal S} \frac{d^{2N} \boldsymbol\mu}{{\rm Vol}({\cal S})} I_F(\ket{\psi}, \hat{X}_{\boldsymbol\mu}) + \int_{\cal S} \frac{d^{2N} \tilde{\boldsymbol\mu}}{{\rm Vol}({\cal S})} I_F(\ket{\psi}, \hat{X}_{\tilde{\boldsymbol\mu}}) \right] \\
&= \int_{\cal S} \frac{d^{2N} \boldsymbol\mu}{{\rm Vol}({\cal S})} \left[ {\rm Var}(\psi, \hat{X}_{\boldsymbol\mu}) +  {\rm Var}(\psi, \hat{P}_{\boldsymbol\mu})\right] \\
&\geq 1,
\end{aligned}
$$
since $d^{2N} \boldsymbol\mu = d^{2N} \tilde{\boldsymbol\mu}$ when integrating over ${\cal S}$.

We now show that the equality holds  {\it if and only if} $\ket{\psi}$ is a coherent state.
Note that ${\cal I}_{\rm opt} (\ket{\psi}) = 1$ or ${\cal I}_{\rm mean} (\ket{\psi}) = 1$ is equivalent to the condition $(1/2) I_F( \ket{\psi}, \hat{X}_{\boldsymbol\mu}) = 1$ for any $\boldsymbol\mu \in {\cal S}$.
Then the ``{\it if} " part can be verified by directly showing that for any coherent state $\ket{\boldsymbol\alpha} = \ket{\alpha_1} \ket{\alpha_2} \cdots \ket{\alpha_N}$.
$$
\begin{aligned}
\frac{1}{2}I_F(\ket{\boldsymbol\alpha}, \hat{X}_{\boldsymbol\mu}) &= \frac{1}{2}  I_F \left(\ket{\alpha_1} \ket{\alpha_2} \cdots \ket{\alpha_N}, \sum_{n=1}^N \frac{\mu_n^* \hat{a}_n + \mu_n \hat{a}_n^\dagger}{\sqrt{2}} \right)= \frac{1}{2}  \sum_{n=1}^N |\mu_n|^2 I_F(\ket{\alpha_n}, \hat{X}_{\boldsymbol \mu_n}) =1,
\end{aligned}
$$
where $\hat{X}_{\boldsymbol\mu_n} = |\mu_n|^{-1} (\mu_n^* \hat{a}_n + \mu_n \hat{a}_n^\dagger )/\sqrt{2}$ is a single-mode quadrature operator of the $n$-th mode. We used the fact that the QFI of any quadrature observable $\hat{x}_\theta = (\hat{a} e^{-i\theta} + \hat{a}^\dagger e^{i\theta})/\sqrt{2}$ is given by $I_F(\ket{\alpha}\bra{\alpha}, \hat{x}_{\theta}) = 2$ for any single-mode coherent state $\ket{\alpha}$.

The ``{\it only if} " part can be proved as follows.
The $N$-mode pure optical state $\ket{\psi}$ can be decomposed into one selected mode, say the first mode, and remaining modes, $\ket{\psi} = \sum_j \sqrt{\lambda_j} | \phi_j^{(1)} \rangle| \phi_j^{(2 \cdots N)} \rangle$, by the Schmidt decomposition.
We note that $\hat{R}^{(1)} = \hat{x}_1$ and $\hat{R}^{(2)} = \hat{p}_1$ then
$$
\begin{aligned}
{\rm Var}(\psi, \hat{R}^{(1)}) + {\rm Var}(\psi, \hat{R}^{(2)}) & = \sum_j \lambda_j \left[ \langle \phi_j^{(1)} | \hat{x}^2_1 |\phi_j^{(1)} \rangle +\langle \phi_j^{(1)} | \hat{p}^2_1 |\phi_j^{(1)} \rangle \right]  - \left[ \sum_j \lambda_j \langle \phi_j^{(1)} | \hat{x}_1 |\phi_j^{(1)} \rangle \right]^2 - \left[ \sum_j \lambda_j \langle \phi_j^{(1)} | \hat{p}_1 |\phi_j^{(1)} \rangle \right]^2 \\
&\quad \geq \sum_j \lambda_j \left[ {\rm Var}(\phi^{(1)}_j, \hat{x}_1) +  {\rm Var}(\phi^{(1)}_j, \hat{p}_1)  \right] \\
&\quad \geq 1,
\end{aligned}
$$
where $\hat{x}_1 = \frac{\hat{a}_1 + \hat{a}_1^\dagger}{\sqrt{2}}$ and $\hat{p}_1 = \frac{\hat{a}_1 - \hat{a}_1^\dagger}{\sqrt{2}i}$ are quadrature operators on the first mode.
The first inequality saturates only if every $\langle \phi_j^{(1)} | \hat{x}_1 |\phi_j^{(1)} \rangle$ gives the same value and $\langle \phi_j^{(1)} | \hat{p}_1 |\phi_j^{(1)} \rangle$ also gives the same value for all $j$ ($\langle \phi_j^{(1)} | \hat{x}_1 |\phi_j^{(1)} \rangle$  and $\langle \phi_j^{(1)} | \hat{p}_1 |\phi_j^{(1)} \rangle$ are not necessarily the same).
The second inequality is the Heisenberg uncertainty and only coherent state $| \phi_j^{(1)} \rangle = |\alpha_j^{(1)} \rangle$ reaches the bound 1.
Thus, combining these two results, $(1/2)I_F(\ket\psi, \hat{X}_{\boldsymbol\mu}) = 1$ for all $\boldsymbol\mu \in {\cal S}$ implies that 
${\rm Var}(\psi, \hat{R}^{(1)}) + {\rm Var}(\psi, \hat{R}^{(2)}) = 1$, thus $\ket{\psi}$ should be written in the form $\ket{\psi} = \ket{\alpha^{(1)}} \ket{\phi^{(2 \cdots N}}$, since a coherent state is uniquely defined by given $\langle \hat{x} \rangle$ and $\langle \hat{p} \rangle$.
We can repeat the same process for each mode, then we finally conclude that $\ket{\psi} = \ket{\alpha^{(1)}} \cdots \ket{\alpha^{(N)}}$ is a multimode coherent state.

Now we prove that ${\cal I}(\hat\rho) \leq 1 $ when $\hat\rho$ is classical.
Note that a classical state can be expressed as a convex sum of multimode coherent states, i.e. $\hat\rho = \sum_i p_i \ket{\boldsymbol\alpha_i} \bra{\boldsymbol\alpha_i}$ for positive $p_i$ satisfying $\sum_i p_i = 1$.
Then by the convexity of the QFI, we get $I_F(\hat\rho, \hat{X}_{\boldsymbol{\mu}}) \leq \sum_i p_i  I_F(\ket{\boldsymbol\alpha_i}\bra{\boldsymbol\alpha_i}, \hat{X}_{\boldsymbol\mu}) \leq 2$, where $I_F(\ket{\boldsymbol\alpha_i}\bra{\boldsymbol\alpha_i}, \hat{X}_{\boldsymbol\mu}) = 2$ for any coherent state $\ket{\boldsymbol\alpha}$.
Then by either taking optimization over $\boldsymbol\mu$ or averaging over  ${\cal S} = \{ \boldsymbol\mu | \sum_{n=1}^{N} |\mu_n|^2 = 1 \}$, we have ${\cal I}_{\rm opt} \leq 1$ and ${\cal I}_{\rm mean} \leq 1$, respectively.

Next, we demonstrate that the quadrature QFI ${\cal I}$ is invariant under optical linear unitaries.
Note that 
$I_F(\hat{U}_L \hat\rho \hat{U}_L^\dagger, \hat{X}_{\boldsymbol{\mu}}) = I_F(\hat\rho, \hat{U}_L^\dagger \hat{X}_{\boldsymbol{\mu}} \hat{U}_L) = I_F(\hat\rho, \hat{X}_{\boldsymbol{\mu'}})$, where there exist an $2N \times 2N$ unitary matrix $V$ such that $\boldsymbol\mu' = V \boldsymbol\mu \in {\cal S}$.
Moreover the unitary matrix $V$ does not change the structure of ${\cal S}$, i.e. $V {\cal S} V^\dagger = {\cal S}$ and $|\det V |= 1 $, which guarantees that the optimal and mean quadrature QFIs do not change by such unitaries.

Finally, we show that combining two-uncorrelated optical systems cannot increase the quadrature QFIs.
Suppose that the optical field $A$ and $B$ of $N$ and $M$ modes, respectively.
Then a collective bosonic operator in the combined system can be expressed using a $2(N +M)$ dimensional real vector $\boldsymbol\mu$.
Let us assume that the optimal quadrature QFI ${\cal I}_{\rm opt}$ of the combined system is given by 
$
{\cal I}_{\rm opt} (\hat{U}_L (\hat\rho \otimes \hat\sigma)  \hat{U}_L^\dagger) = {\cal I}_{\rm opt} (\hat\rho \otimes \hat\sigma) = (1/2)\max_{\boldsymbol\mu \in {\cal S}} I_F (\hat\rho \otimes \hat\sigma, \hat{X}_{\boldsymbol{\mu}}) = (1/2)I_F (\hat\rho \otimes \hat\sigma, \hat{X}_{\boldsymbol{\tilde{\mu}}}),
$
where the maximum is achieved at $\{ \tilde\mu_n \}$
and $\hat{U}_L$ does not change the quadrature QFI $\cal P_{\rm opt}$.
Note that the quadrature $\hat{X}_{\boldsymbol{\tilde{\mu}}}$
can be divided into two parts $\hat{X}_{\boldsymbol{\tilde{\mu}}} = \sqrt\lambda \hat{X}_{\boldsymbol\mu_A} \otimes {\mathbb 1}_B + \sqrt{1-\lambda} {\mathbb 1}_A \otimes  \hat{X}_{\boldsymbol\mu_B}$, where $\boldsymbol{\mu}_A  = \lambda^{-1/2} ({\rm Re}\tilde{\mu}_1, {\rm Im}\tilde{\mu}_1, {\rm Re}, \cdots, {\rm Re}\tilde{\mu}_N, {\rm Im}\tilde{\mu}_N)^T$ and $\boldsymbol{\mu}_B  = (1-\lambda)^{-1/2} ({\rm Re}\tilde{\mu}_{N+1}, {\rm Im}\tilde{\mu}_{N+1}, \cdots, {\rm Re}\tilde{\mu}_{N+M}, {\rm Im}\tilde{\mu}_{N+M})^T$ with $\lambda = \sum_{i=1}^N |\tilde\mu_i|^2$.
Note that $|\boldsymbol\mu_A|^2 =  1 = |\boldsymbol\mu_B|^2$.
Then we have 
\begin{equation}
\label{EQ3}
\begin{aligned}
{\cal I}_{\rm opt}  (\hat\rho \otimes \hat\sigma) & = \frac{1}{2} I_F (\hat\rho \otimes \hat\sigma, \hat{X}_{\boldsymbol{\tilde\mu}}) \\
&= \frac{1}{2}   I_F (\hat\rho \otimes \hat\sigma,\sqrt\lambda \hat{X}_{\boldsymbol\mu_A} \otimes {\mathbb 1}_B + \sqrt{1-\lambda} {\mathbb 1}_A \otimes  \hat{X}_{\boldsymbol\mu_B})  \\
&= \frac{1}{2}  \left[ \lambda I_F (\hat\rho , \hat{X}_{\boldsymbol\mu_A})
+ (1-\lambda) I_F(\hat\sigma,  \hat{X}_{\boldsymbol\mu_B}) \right] \\
&\leq \max \{ {\cal I}_{\rm opt} (\hat\rho), {\cal I}_{\rm opt} (\hat\sigma) \},
\end{aligned}
\end{equation}
where we used the fact that $\frac{1}{2} I_F(\hat\rho, \hat{X}_{\boldsymbol{\mu_A}}) \leq {\cal I}_{\rm opt}(\rho)$ and  $\frac{1}{2} I_F(\hat\sigma, \hat{X}_{\boldsymbol{\mu_B}}) \leq {\cal I}_{\rm opt}(\sigma)$ and $0 \leq \lambda \leq 1$ for the last inequality.

For the mean quadrature QFI, we note that 
$$
{\cal I}_{\rm mean} (\hat\rho) = \frac{ {\rm Tr}{(\bold F})}{4N}= \frac{1}{4N} \sum_{k=1}^{2N} I_F(\hat\rho, \hat{R}^{(k)}).
$$
We have then the same property for the mean quadrature QFI,
$$
\begin{aligned}
{\cal I}_{\rm mean} (\hat\rho \otimes \hat\sigma) 
=  \frac{1}{4(N+M)} \sum_{k=1}^{2(N+M)} I_F(\hat\rho \otimes \hat\sigma, \hat{R}^{(k)}) 
= \frac{N {\cal I}_{\rm mean} (\hat\rho) + M {\cal I}_{\rm mean} (\hat\sigma)}{N+M} 
 \leq \max \{ {\cal I}_{\rm mean} (\hat\rho), {\cal I}_{\rm mean} (\hat\sigma)\}.
\end{aligned}
$$
\end{proof}

We also show the following properties of the QFI:
\begin{lemma} 
\label{FIP}
For a quantum classical state $\sum_i p_i \hat\rho^{(i)}_A \otimes \ket{i}_B\bra{i}$ with orthogonal basis $\{\ket{i}_B\}$, the QFI with respect to the given local observable $\hat{L}_A$ is given by
\begin{equation}
I_F \left(\sum_i p_i \hat\rho^{(i)}_A \otimes \ket{i}_B\bra{i} , \hat{L}_A \otimes {\mathbb 1}_B \right) = \sum_i p_i I_F(\hat\rho^{(i)}_A, \hat{L}_A).
\end{equation}
\end{lemma}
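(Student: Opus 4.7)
The plan is to proceed directly from the spectral formula for the quantum Fisher information. Write each block in its eigenbasis as $\hat\rho^{(i)}_A = \sum_{k} \lambda^{(i)}_k \ket{k^{(i)}}\bra{k^{(i)}}$. Since the $B$-register basis $\{\ket{i}_B\}$ is orthogonal, the joint state $\hat\rho_{AB}=\sum_i p_i \hat\rho^{(i)}_A \otimes \ket{i}_B\bra{i}$ is block diagonal and therefore its spectral decomposition is simply $\hat\rho_{AB}=\sum_{i,k} p_i \lambda^{(i)}_k \ket{k^{(i)}}_A\ket{i}_B\bra{k^{(i)}}_A\bra{i}_B$, with eigenvalues $\mu_{i,k}:=p_i\lambda^{(i)}_k$ and eigenvectors labeled by the pair $(i,k)$.

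Next I would insert this decomposition into the general expression
\begin{equation*}
I_F(\hat\rho_{AB},\hat{O}) = 2 \sum_{(i,k),(j,k')} \frac{(\mu_{i,k}-\mu_{j,k'})^2}{\mu_{i,k}+\mu_{j,k'}}\, \bigl|\bra{k^{(i)},i}\hat{O}\ket{k'^{(j)},j}\bigr|^2 ,
\end{equation*}
with $\hat{O}=\hat{L}_A\otimes \mathbb{1}_B$, summed over pairs whose total weight is nonzero. The crucial observation is that $\bra{k^{(i)},i}\hat{L}_A\otimes\mathbb{1}_B\ket{k'^{(j)},j} = \delta_{ij}\bra{k^{(i)}}\hat{L}_A\ket{k'^{(i)}}$, because $\mathbb{1}_B$ is diagonal in $\{\ket{i}_B\}$. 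This collapses the double sum over $(i,j)$ to the diagonal $i=j$.

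For the terms with $i=j$ the coefficient becomes $(p_i\lambda^{(i)}_k - p_i\lambda^{(i)}_{k'})^2/(p_i\lambda^{(i)}_k + p_i\lambda^{(i)}_{k'}) = p_i\,(\lambda^{(i)}_k-\lambda^{(i)}_{k'})^2/(\lambda^{(i)}_k+\lambda^{(i)}_{k'})$, so the overall $p_i$ factors out linearly. What remains inside each block is exactly $2\sum_{k,k'}\tfrac{(\lambda^{(i)}_k-\lambda^{(i)}_{k'})^2}{\lambda^{(i)}_k+\lambda^{(i)}_{k'}}|\bra{k^{(i)}}\hat{L}_A\ket{k'^{(i)}}|^2 = I_F(\hat\rho^{(i)}_A,\hat{L}_A)$, yielding the claimed identity $I_F(\hat\rho_{AB},\hat{L}_A\otimes\mathbb{1}_B)=\sum_i p_i I_F(\hat\rho^{(i)}_A,\hat{L}_A)$.

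There is essentially no deep obstacle: the result is a structural consequence of the block-diagonal form. The only technical caveat is handling terms with $\mu_{i,k}+\mu_{j,k'}=0$, which by convention contribute zero (their numerators also vanish), and being careful that zero eigenvalues of $\hat\rho^{(i)}_A$ and vanishing $p_i$ do not spuriously appear in either side. This can be handled by restricting the sums to the supports of each $\hat\rho^{(i)}_A$ and to indices with $p_i>0$, after which the identification with $\sum_i p_i I_F(\hat\rho^{(i)}_A,\hat{L}_A)$ is immediate.
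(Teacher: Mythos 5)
Your proposal is correct and follows essentially the same route as the paper: both diagonalize the block-diagonal state with eigenvalues $p_i\lambda^{(i)}_k$, use orthogonality of $\{\ket{i}_B\}$ to kill the cross-block matrix elements of $\hat{L}_A\otimes\mathbb{1}_B$, and factor $p_i$ out of the within-block terms to recover $\sum_i p_i I_F(\hat\rho^{(i)}_A,\hat{L}_A)$. Your explicit remark about restricting to nonzero-weight terms is a minor bookkeeping point the paper leaves implicit.
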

\begin{proof}
Note that eigenvalues of $\sum_i p_i \hat\rho^{(i)}_A \otimes \ket{i}_B\bra{i}$ are given by $p_i \lambda^{(i)}_\mu$, where $\lambda^{(i)}_\mu$ are eigenvalues of $\hat{\rho}^{(i)}_A$ with corresponding eigenstates $\ket{\phi_\mu^{(i)}}$.
By direct calculation, we have 
\begin{equation}
\begin{aligned}
I_F \left(\sum_i p_i \hat\rho^{(i)}_A \otimes \ket{i}_B\bra{i} , \hat{L}_A \otimes {\mathbb 1}_B \right) &= 2 \sum_{i,j} \sum_{\mu, \nu} \frac{\left(p_i \lambda^{(i)}_\mu - p_j \lambda^{(j)}_\nu \right)^2}{p_i \lambda^{(i)}_\mu + p_j \lambda^{(j)}_\nu} |\bra{\phi_\mu^{(i)}}_A \bra{i}_B \left( \hat{L}_A \otimes {\mathbb 1}_B \right) |\phi_\nu^{(j)} \rangle_A \ket{j}_B|^2 \\
&= 2 \sum_{i,j} \sum_{\mu, \nu} \frac{\left(p_i \lambda^{(i)}_\mu - p_j \lambda^{(j)}_\nu \right)^2}{p_i \lambda^{(i)}_\mu + p_j \lambda^{(j)}_\nu} |\bra{\phi_\mu^{(i)}} \hat{L}_A |\phi_\nu^{(j)} \rangle_A |^2 |\langle i | j\rangle_B|^2 \\
&= 2 \sum_{i} p_i \sum_{\mu, \nu} \frac{\left(\lambda^{(i)}_\mu - \lambda^{(i)}_\nu \right)^2}{\lambda^{(i)}_\mu + \lambda^{(i)}_\nu} \left| \langle\phi_\mu^{(i)} | \hat{L}_A |\phi_\nu^{(i)} \rangle_A \right|^2\\
&= \sum_i p_i I_F(\hat\rho^{(i)}_A, \hat{L}_A).
\end{aligned}
\end{equation}
\end{proof}

\begin{lemma} For an arbitrary quantum state $\hat\rho$ and Hermitian operators $\hat{A}$ and $\hat{B}$,
\label{QFIBound}
\begin{equation}
\begin{aligned}
\left| \sqrt{I_F(\hat\rho, \hat{A})} - \sqrt{I_F(\hat\rho, \hat{B})} \right|  \leq \sqrt{ I_F(\hat\rho, \hat{A} + \hat{B}) } \leq  \sqrt{I_F(\hat\rho, \hat{A})} + \sqrt{I_F(\hat\rho, \hat{B})}.
\end{aligned}
\end{equation}
\end{lemma}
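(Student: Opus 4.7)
The plan is to recognize that $\sqrt{I_F(\hat\rho,\cdot)}$ behaves like a seminorm on the real vector space of Hermitian operators, so the two inequalities are just the usual triangle and reverse triangle inequalities coming from a Cauchy--Schwarz bound.

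First I would fix the spectral decomposition $\hat\rho=\sum_i\lambda_i\ket{i}\bra{i}$ and recall the formula
$$
I_F(\hat\rho,\hat H)=2\sum_{i,j}\frac{(\lambda_i-\lambda_j)^2}{\lambda_i+\lambda_j}\,|\langle i|\hat H|j\rangle|^2,
$$
with the convention that the term vanishes when $\lambda_i+\lambda_j=0$. Writing $f_{ij}:=\tfrac{(\lambda_i-\lambda_j)^2}{\lambda_i+\lambda_j}\ge 0$ (and $f_{ij}=f_{ji}$), I would then define, for arbitrary Hermitian $\hat A,\hat B$, the bilinear form
$$
\langle \hat A,\hat B\rangle_{\hat\rho}:=2\sum_{i,j}f_{ij}\,\langle i|\hat A|j\rangle\langle j|\hat B|i\rangle.
$$
A short calculation, using Hermiticity ($\langle i|\hat A|j\rangle^*=\langle j|\hat A|i\rangle$) and the symmetry $f_{ij}=f_{ji}$, shows that $\langle \hat A,\hat B\rangle_{\hat\rho}$ is real and symmetric in $\hat A,\hat B$, and that $\langle \hat A,\hat A\rangle_{\hat\rho}=I_F(\hat\rho,\hat A)\ge 0$. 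Hence this form is a positive semi-definite symmetric bilinear form on Hermitian operators.

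Next I would apply the Cauchy--Schwarz inequality to this form,
$$
|\langle \hat A,\hat B\rangle_{\hat\rho}|\le \sqrt{I_F(\hat\rho,\hat A)}\sqrt{I_F(\hat\rho,\hat B)}.
$$
Expanding $I_F(\hat\rho,\hat A+\hat B)=I_F(\hat\rho,\hat A)+2\langle \hat A,\hat B\rangle_{\hat\rho}+I_F(\hat\rho,\hat B)$ and substituting the Cauchy--Schwarz bound both with a $+$ and a $-$ sign yields
$$
\bigl(\sqrt{I_F(\hat\rho,\hat A)}-\sqrt{I_F(\hat\rho,\hat B)}\bigr)^2\le I_F(\hat\rho,\hat A+\hat B)\le\bigl(\sqrt{I_F(\hat\rho,\hat A)}+\sqrt{I_F(\hat\rho,\hat B)}\bigr)^2,
$$
and taking square roots (with an absolute value on the left) gives exactly the claimed chain of inequalities.

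The only genuine subtlety, and the place I would be most careful, is checking that $\langle \hat A,\hat B\rangle_{\hat\rho}$ is truly a well-defined, real, positive semi-definite bilinear form on \emph{Hermitian} operators — including the degenerate and kernel cases $\lambda_i+\lambda_j=0$ (which must be excluded consistently with the QFI formula) — so that Cauchy--Schwarz actually applies. Once this is in place, the rest is purely algebraic. An alternative but equivalent route, which I would mention as a sanity check, is to use the symmetric logarithmic derivative: linearity of the SLD in the generator, $\hat L_{\hat A+\hat B}=\hat L_{\hat A}+\hat L_{\hat B}$, together with the identification $\sqrt{I_F(\hat\rho,\hat H)}=\sqrt{\mathrm{Tr}[\hat\rho\,\hat L_{\hat H}^2]}$, reduces the statement to the ordinary triangle inequality for the $L^2(\hat\rho)$-seminorm on operators.
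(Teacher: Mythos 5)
Your proposal is correct and follows essentially the same route as the paper: both expand $I_F(\hat\rho,\hat A+\hat B)$ into $I_F(\hat\rho,\hat A)+I_F(\hat\rho,\hat B)$ plus a cross term and bound that cross term by $\sqrt{I_F(\hat\rho,\hat A)I_F(\hat\rho,\hat B)}$ via Cauchy--Schwarz on the weighted sum over matrix elements, then complete the square. Your packaging of this as a positive semi-definite bilinear form (and the SLD remark) is just a more abstract phrasing of the identical computation.
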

\begin{proof}
Note that the QFI is always positive and can be rewritten as
$$
\begin{aligned}
I_F(\hat\rho, \hat{A} + \hat{B}) &= 2 \sum_{i,j} \frac{(\lambda_i - \lambda_j)^2}{\lambda_i + \lambda_j} | \bra{i} \hat{A} + \hat{B} \ket{j}|^2 \\
&= I_F(\hat\rho, \hat{A}) + I_F(\hat\rho, \hat{B}) + 4 \sum_{i,j} \frac{(\lambda_i - \lambda_j)^2}{\lambda_i + \lambda_j} \bra{i} \hat{A} \ket{j} \bra{j} \hat{B} \ket{i},
\end{aligned}
$$
due to its symmetry under exchanging $i$ and $j$. Then the last term, which has a real value, is bounded by
$$
\begin{aligned}
\left| \sum_{i,j} \frac{(\lambda_i - \lambda_j)^2}{\lambda_i + \lambda_j} \bra{i} \hat{A} \ket{j} \bra{j} \hat{B} \ket{i} \right| 
&\leq \sqrt{ \sum_{i,j} \frac{(\lambda_i - \lambda_j)^2}{\lambda_i + \lambda_j} |\bra{i} \hat{A} \ket{j}|^2} \sqrt{ \sum_{i,j} \frac{(\lambda_i - \lambda_j)^2}{\lambda_i + \lambda_j} |\bra{i} \hat{B} \ket{j}|^2} \\
&= \frac{1}{2} \sqrt{I_F(\hat\rho, \hat{A}) I_F(\hat\rho, \hat{B})},
\end{aligned}
$$
by using the Cauchy-Schwarz inequality. Finally, we have the claimed statement as
$$
\begin{aligned}
I_F(\hat\rho, \hat{A}) + I_F(\hat\rho, \hat{B}) - 2\sqrt{I_F(\hat\rho, \hat{A}) I_F(\hat\rho, \hat{B})}  \leq &I_F(\hat\rho, \hat{A} + \hat{B})  \leq I_F(\hat\rho, \hat{A}) + I_F(\hat\rho, \hat{B}) + 2\sqrt{I_F(\hat\rho, \hat{A}) I_F(\hat\rho, \hat{B})}\\
\Longrightarrow \left| \sqrt{I_F(\hat\rho, \hat{A})} - \sqrt{I_F(\hat\rho, \hat{B})} \right|  \leq &\sqrt{ I_F(\hat\rho, \hat{A} + \hat{B}) } \leq  \sqrt{I_F(\hat\rho, \hat{A})} + \sqrt{I_F(\hat\rho, \hat{B})}.
\end{aligned}
$$
\end{proof}

\section{Proof of Theorem~1}
\begin{proof}
Here we prove several important properties related to the nonclassicality measure 
$$
{\cal Q}(\hat\rho) := \min_{\{p_i, \ket{\psi_i} \}} \sum_i p_i \overline{\cal V}(\ket{\psi_i}) -1,
$$
 given by the convex roof of $\overline{\cal V}$.

{\it 1. Faithfulness condition:}
 If $\hat\rho$ is classical, the state can be represented as a convex sum of coherent states $\ket{\boldsymbol\alpha}$, which leads to ${\cal Q}(\hat\rho) = 0$ since $\overline{\cal V}(\ket{\boldsymbol\alpha}) = 1$.
Conversely, if ${\cal Q}(\hat\rho) = \min_{\{p_i, \psi_i \}} \sum_i p_i \overline{\cal V}(\ket{\psi_i}) -1= 0 $, there exists a pure state composition of $\hat\rho = \sum_i p_i^* \ket{\psi_i^*} \bra{\psi_i^*}$ that $\overline{\cal V}(\ket{\psi_i^*}) = 1 $ for every $i$. Note that $\overline{\cal V}(\ket{\psi_i^*}) = 1 $ {\it if and only if} $\ket{\psi_i^*}$ is a coherent state by Proposition~1, and using this decomposition, $\hat\rho$ can be expressed as a convex sum of coherent states, i.e. $\hat\rho$ is classical.
Thus we ${\cal Q}(\hat\rho) =0$ {\it if and only if} $\hat\rho$ is classical.\\

{\it 3. Convexity:} Convexity is guaranteed by the convex roof construction.\\

{\it 2-(a). Strong monotonicity condition:} In order to prove the strong monotonicity of the mean quadrature QFI, we first show that $\sum_i q_i \overline{\cal V}(\hat{K}_i \ket{\psi} / \sqrt{q_i}) \leq \overline{\cal V}(\ket{\psi})$ for a set of Kraus operators $\{ \hat{K}_i \}$ for a linear optical map with $q_i = \Tr \bra{\psi} \hat{K}_i^\dagger \hat{K}_i \ket{\psi}$. We let $\ket{\phi_i} = \hat{K}_i \ket{\psi} / \sqrt{q_i}$, then there exists a classical state $\hat\sigma_{EE'}$ and a linear optical unitary $\hat{U}_L$, such that ${\rm Tr}_E \hat{U}_L \left( \ket{\psi}\bra{\psi} \otimes \hat\sigma_{EE'} \right) \hat{U}_L^\dagger = \sum_i q_i \ket{\phi_i} \bra{\phi_i} \otimes \ket{i}_{E'}\bra{i}$.
Note that $\overline{\cal V} (\ket{\phi_i}) = (4N)^{-1} \sum_{k=1}^{2N} I_F(\ket{\phi_i}\bra{\phi_i}, \hat{R}^{(k)})$, then we get
\begin{equation*}
\begin{aligned}
\sum_i q_i \overline{\cal V} ( \ket{\phi_i})  & = \frac{1}{4N} \sum_i q_i  \sum_{k=1}^{2N} I_F \left(  \ket{\phi_i}\bra{\phi_i}, \hat{R}^{(k)} \right)   \\
& = \frac{1}{4N} \sum_{k=1}^{2N} I_F \left( \sum_i q_i \ket{\phi_i}\bra{\phi_i} \otimes \ket{i}_{E'}\bra{i}, \hat{R}^{(k)} \otimes {\mathbb 1}_{E'} \right)   \\
& = \frac{1}{4N} \sum_{k=1}^{2N} I_F \left( {\rm Tr}_E \hat{U}_L \left( \ket{\psi}\bra{\psi} \otimes \hat{\sigma}_{EE'}\right) \hat{U}_L^\dagger, \hat{R}^{(k)} \otimes {\mathbb 1}_{E'} \right) \\
& \leq \frac{1}{4N} \sum_{k=1}^{2N} I_F \left( \hat{U}_L \left( \ket{\psi}\bra{\psi} \otimes \hat{\sigma}_{EE'}\right) \hat{U}_L^\dagger, \hat{R}^{(k)} \otimes {\mathbb 1}_{EE'} \right) \\
& \leq \sum_j \frac{p_j}{4N} \sum_{k=1}^{2N} I_F \big( \hat{U}_L \left( \ket{\psi}\bra{\psi} \otimes \ket{\boldsymbol\alpha_j}_{EE'}\bra{\boldsymbol\alpha_j}\right) \hat{U}_L^\dagger,  \hat{R}^{(k)} \otimes {\mathbb 1}_{EE'} \big) ,
\end{aligned}
\end{equation*}
where $\hat{\sigma}_{EE'} = \sum_j p_j \ket{\boldsymbol\alpha_j} \bra{\boldsymbol\alpha_j}$ is classical, and the first and second inequalities come from contractivity and convexity of the QFI, respectively.
We prove that $\sum_{k=1}^{2N} I_F \left( \hat{U}_L \left( \ket{\psi}\bra{\psi} \otimes \ket{\boldsymbol\alpha_j}_{EE'}\bra{\boldsymbol\alpha_j}\right) \hat{U}_L^\dagger, \hat{R}^{(k)} \otimes {\mathbb 1}_{EE'} \right) \leq \sum_{k=1}^{2N} I_F \left( \ket{\psi}\bra{\psi}, \hat{R}^{(k)} \right)$.
Suppose $\hat{\sigma}_{EE'}$ is a $M$-mode classical state. We have
$$
\begin{aligned}
&\sum_{k=1}^{2N} I_F \left( \hat{U}_L \left( \ket{\psi}\bra{\psi} \otimes \ket{\boldsymbol\alpha_j}_{EE'}\bra{\boldsymbol\alpha_j}\right) \hat{U}_L^\dagger, \hat{R}^{(k)} \otimes {\mathbb 1}_{EE'} \right) \\
&= \sum_{k=1}^{2(N+M)} I_F \left( \hat{U}_L \left( \ket{\psi}\bra{\psi} \otimes \ket{\boldsymbol\alpha_j}_{EE'}\bra{\boldsymbol\alpha_j}\right) \hat{U}_L^\dagger, \hat{R}^{(k)} \right)  - \sum_{l=1}^{2M} I_F \left( \hat{U}_L \left( \ket{\psi}\bra{\psi} \otimes \ket{\boldsymbol\alpha_j}_{EE'}\bra{\boldsymbol\alpha_j}\right) \hat{U}_L^\dagger, {\mathbb 1} \otimes \hat{R}^{(l)}_{EE'} \right)
\\
&\leq \sum_{k=1}^{2(N+M)} I_F \left( \ket{\psi}\bra{\psi} \otimes \ket{\boldsymbol\alpha_j}_{EE'}\bra{\boldsymbol\alpha_j}, \hat{R}^{(k)} \right)- 4M \\
&= \sum_{l=1}^{2N} I_F \left( \ket{\psi}\bra{\psi}, \hat{R}^{(l)} \right) + \sum_{l=1}^{2M} I_F \left(  \ket{\boldsymbol\alpha_j}_{EE'}\bra{\boldsymbol\alpha_j}, \hat{R}^{(l)} \right)  - 4M \\
&\leq \sum_{k=1}^{2N} I_F \left( \ket{\psi}\bra{\psi}, \hat{R}^{(k)} \right).
\end{aligned}
$$
Finally, we get 
$$
\begin{aligned}
\sum_i q_i \overline{\cal V} ( \ket{\phi_i}) &\leq \sum_j \frac{p_j}{4N} \sum_{k=1}^{2N} I_F \big( \hat{U}_L \left( \ket{\psi}\bra{\psi} \otimes \ket{\boldsymbol\alpha_j}_{EE'}\bra{\boldsymbol\alpha_j}\right) \hat{U}_L^\dagger,   \hat{R}^{(k)} \otimes {\mathbb 1}_{EE'} \big) \\
& \leq (4N)^{-1} \sum_i q_i \sum_{k=1}^{2N} I_F \left( \ket{\psi}\bra{\psi}, \hat{R}^{(k)} \right)   \\
&= \overline{\cal V}(\ket{\psi}).
\end{aligned}
$$

Now we prove the strong monotonicity of ${\cal Q}$. Suppose ${\cal Q}(\hat\rho) = \min_{\{p_\mu, \psi_\mu \}} \sum_\mu p_\mu \overline{\cal V}(\ket{\psi_\mu}) -1 = \sum_\mu p^*_\mu \overline{\cal V}(\ket{\psi^*_\mu}) -1 $, where the minimum is achieved at $\hat\rho = \sum_\mu p^*_\mu \ket{\psi^*_\mu} \bra{\psi^*_\mu}$. Then we have 
\begin{equation}
\begin{aligned}
\sum_i q_i {\cal Q}(\hat{K}_i \hat\rho \hat{K}_i^\dagger / q_i) &\leq \sum_i \sum_\mu p^*_\mu q^\mu_i \overline{\cal V} \left(\hat{K}_i \ket{\psi^*_\mu} / \sqrt{q^\mu_i} \right) -1 \leq \sum_\mu p^*_\mu \overline{\cal V}(\ket{\psi^*_\mu}) -1 ={\cal Q}(\hat\rho),
\end{aligned}
\end{equation}
where $\hat{K}_i \hat\rho \hat{K}_i^\dagger / q_i = \sum_\mu (p^*_\mu q^\mu_i /q_i) \left( \frac{\hat{K}_i \ket{\psi^*_\mu}}{\sqrt{q^\mu_i}}\right) \left( \frac{\bra{\psi^*_\mu} \hat{K}_i^\dagger}{\sqrt{q^\mu_i}} \right)$ with $q^\mu_i = \bra{\psi^*_\mu} \hat{K}_i^\dagger \hat{K}_i \ket{\psi^*_\mu}$.\\

{\it 2-(b). Weak monotonicity:}
Finally, weak monotonicity can be derived by the strong monotonicity condition and convexity of ${\cal Q}$:
\begin{equation}
{\cal Q}\left(\sum_i \hat{K}_i \hat\rho \hat{K}_i^\dagger \right) \leq \sum_i p_i {\cal Q} \left( \hat{K}_i \hat\rho \hat{K}_i^\dagger / p_i \right) \leq {\cal Q}\left(\hat\rho\right),
\end{equation}
where $p_i = \Tr \hat\rho \hat{K}_i^\dagger \hat{K}_i$.
\end{proof}

\section{Proof of Theorem~2}

\begin{proof} Here we prove several important properties related to the metrological power ${\cal M}$ given by ${\cal M}(\hat\rho) = {\cal I}_{\rm opt} (\hat\rho \otimes \ket{0}\bra{0}) -1$.

{\it 1. ${\cal M}(\hat\rho) = 0 $ if $\hat\rho$ is classical:}
A classical quantum state has a positive-$P$ representation, which allows us to represent $\hat\rho = \pi^{-N} \int d^{2N} \boldsymbol\alpha P_{\hat\rho}( \boldsymbol \alpha) \ket{\boldsymbol \alpha} \bra{ \boldsymbol \alpha}$ as a convex sum of coherent states.
Then by convexity of ${\cal M}$ and using the fact that ${\cal M}(\ket{\alpha} \bra{\alpha}) =0$ for any coherent states, we have ${\cal M}(\hat\rho) =0$.\\

{\it 2. Monotonicity under a linear optical map:}
We note that the optimal quadrature QFI is contractive under partial trace:
$$
\begin{aligned}
{\cal I}_{\rm opt}({\rm Tr}_B \hat\rho_{AB}) &= \frac{1}{2} \max_{\mu_A \in {\cal S_A}} I_F({\rm Tr}_B \hat\rho_{AB}, \hat{X}_{\mu_A}) \\
&\leq \frac{1}{2} I_F(\hat\rho_{AB}, \hat{X}_{\tilde\mu_A} \otimes {\mathbb 1}_B) \\
&\leq \frac{1}{2} \max_{\mu_{AB} \in {\cal S_{AB}}} I_F(\hat\rho_{AB}, \hat{X}_{\mu_{AB}}) \\
&= {\cal I}_{\rm opt} (\hat\rho_{AB}),
\end{aligned}
$$
where $\tilde\mu_A$ gives the maximum metrolgical power for ${\rm Tr}_B \hat\rho_{AB}$.
Note that a linear optical map on $\hat\rho$ can be realized by $\Phi_L(\hat\rho) = {\rm Tr}_E \hat{U}_L \left( \hat\rho \otimes \hat\sigma_E \right)\hat{U}_L^\dagger$, where $\hat\sigma_E$ is classical and $\hat{U}_L$ is given by combinations of beam splitter operations, phase rotations, and displacement operations.
Then by Proposition~1, we have 
$$
\begin{aligned}
{\cal I}_{\rm opt}(\Phi_L(\hat\rho)) &= {\cal I}_{\rm opt}({\rm Tr}_E \hat{U}_L (\hat\rho \otimes \hat\sigma_E) \hat{U}_L^\dagger ) \\& \leq {\cal I}_{\rm opt}( \hat{U}_L (\hat\rho \otimes \hat\sigma_E) \hat{U}_L^\dagger ) \\& \leq \max \{ {\cal I}_{\rm opt}(\hat\rho), {\cal I}(\hat\sigma_E) \} \\& \leq \max \{ {\cal I}_{\rm opt}(\hat\rho), 1 \},
\end{aligned}
$$
since ${\cal I}_{\rm opt}(\hat\sigma_E) \leq 1$.
Then we get 
$$
{\cal M}(\Phi_L(\hat\rho)) = \max \{ {\cal I}_{\rm opt}(\Phi_L(\hat\rho)) - 1, 0 \} \leq \max \left\{ {\cal I}_{\rm opt} (\hat\rho) - 1 , 0 \right\} = {\cal M}(\hat\rho).\\
$$

{\it 3. Convexity:} Convexity of ${\cal M}$ is guaranteed by convexity of the QFI.\\

{\it 4. ${\cal M}(\hat\rho_A \otimes \hat\sigma_B) = {\rm max} \{ {\cal M}(\hat\rho_A), {\cal M}(\hat\sigma_B) \}$:}
We now show the final condition by noting that  ${\cal I}_{\rm opt}(\hat\rho_A \otimes \hat\sigma_B) \leq {\rm max} \{ {\cal I}_{\rm opt}(\hat\rho_A), {\cal I}_{\rm opt}(\hat\sigma_B) \}$ from Proposition~\ref{LOF}.
Then we can always choose $\hat{X}_{\boldsymbol\mu}$ with $\boldsymbol\mu = ({\tilde{\boldsymbol\mu}_A}^T, \boldsymbol{ 0}, \cdots, \boldsymbol{0})^T$ or $\boldsymbol\mu = (\boldsymbol{ 0}, \cdots, \boldsymbol{0}, {\tilde{\boldsymbol\mu}_A}^T)^T$ to achieve ${\cal I}_{\rm opt} (\hat\rho_A \otimes \hat\sigma_B) = {\rm max} \{ {\cal I}_{\rm opt}(\hat\rho_A ), {\cal I}_{\rm opt}(\hat\sigma_B) \}$ (see Eq.~(\ref{EQ3})), where $\tilde{\boldsymbol\mu}_A$ and $\tilde{\boldsymbol\mu}_B$ give the maximum quadrature QFI for $\hat\rho$ and $\hat\sigma$, respectively. This condition also leads to the fact that 
$${\cal M}(\hat\rho) = {\cal I}_{\rm opt} (\hat\rho \otimes \ket{0}\bra{0}) -1 ={\rm max} \{{\cal I}_{\rm opt} (\hat\rho), {\cal I}_{\rm opt}(\ket{0}\bra{0})\} -1 = {\rm max} \{{\cal I}_{\rm opt}(\hat\rho) -1 ,0 \},$$
since ${\cal I}_{\rm opt}(\ket{0}\bra{0}) = 1$.
\end{proof}

\section{Metrological power of quantum states}
\subsection{Decohered cat states}
A decohered optical cat state is given as follows:
$$\hat\rho_{\Gamma} = \frac{1}{N_\Gamma} \left[ \ket{\alpha} \bra{\alpha} + \ket{-\alpha} \bra{-\alpha} + \Gamma (\ket{\alpha} \bra{-\alpha} + \ket{-\alpha} \bra{\alpha}) \right], $$
where positive and negative values of $\Gamma$ refer to decohered even and odd cat states, respectively and $N_\Gamma = 2 + 2 \Gamma e^{-2\alpha^2}$.
Also, $|\Gamma| \leq 1$ and a lower value of $|\Gamma|$ refers to a more decohered cat state.
Note that we can express a decohered cat state in terms of pure even $\ket{e} = (\ket{\alpha} + \ket{-\alpha})/\sqrt{N_e}$ and odd $\ket{o} = (\ket{\alpha} - \ket{-\alpha})/\sqrt{N_o}$ cat states, where $N_e = 2 + 2 e^{-2|\alpha|^2}$ and $N_o = 2 - 2 e^{-2|\alpha|^2} $.
In this orthogonal basis $\{ \ket{o}, \ket{e} \}$, we have
$$
\hat\rho_{\Gamma} = \frac{1}{2N_\Gamma} \left[ N_e (1 + \Gamma) \ket{e}\bra{e} + N_e(1 - \Gamma) \ket{o}\bra{o} \right].
$$
Then we have
$$
\begin{aligned}
I_F(\hat\rho_\Gamma, \hat{X}_\theta) &= 2 \sum_{i,j} \frac{(\lambda_i - \lambda_j)^2}{\lambda_i + \lambda_j} | \bra{i} \hat{X}_\theta \ket{j}|^2 \\
&= 4 {\rm Tr} [\hat\rho_\Gamma \hat{X}_\theta^2] - \sum_{i,j} \frac{8\lambda_i \lambda_j}{\lambda_i + \lambda_j} | \bra{i} \hat{X}_\theta \ket{j}|^2 \\
&= 4 {\rm Tr} [\hat\rho_\Gamma \hat{X}_\theta^2] - 16 \lambda_e \lambda_o | \bra{e} \hat{X}_\theta \ket{o}|^2  - 4 \left[ \lambda_e |\bra{e}\hat{X}_\theta \ket{e}|^2 + \lambda_o |\bra{o} \hat{X}_\theta \ket{o}|^2\right],
\end{aligned}
$$
where $\hat{X}_\theta = \frac{1}{\sqrt{2}} (\hat{a} e^{-i \theta} + \hat{a}^\dagger e^{i \theta})$, $\lambda_e = \frac{N_e (1 + \Gamma)}{2N_\Gamma} $, and $\lambda_o = \frac{N_o (1 - \Gamma)}{2N_\Gamma} $.
Direct calculation leads to
$$
\begin{aligned}
\bra{e}\hat{X}_\theta \ket{e} &= 0 = \bra{o}\hat{X}_\theta \ket{o} \\
\bra{e}\hat{X}_\theta^2 \ket{e} &= \frac{1}{2} \left[ 2\left( \frac{N_o}{N_e} \right)|\alpha|^2+ 1 + \alpha^2 e^{-2 i \theta} + (\alpha^*)^2 e^{2 i \theta} \right] = |\alpha|^2 \left[ \frac{N_o}{N_e} + \cos (2(\theta - \phi))\right] + \frac{1}{2}  \\
\bra{o}\hat{X}_\theta^2 \ket{o}  &=\frac{1}{2} \left[ 2\left( \frac{N_e}{N_o} \right)|\alpha|^2+ 1 + \alpha^2 e^{-2 i \theta} + (\alpha^*)^2 e^{2 i \theta} \right] = |\alpha|^2 \left[ \frac{N_e}{N_o} + \cos (2(\theta - \phi))\right] + \frac{1}{2}  \\
|\bra{e}\hat{X}_\theta \ket{o}|^2 &= \frac{1}{2} \left| \alpha e^{-i\theta} \sqrt{\frac{N_o}{N_e}} + \alpha^* e^{i\theta} \sqrt{\frac{N_e}{N_o}}  \right|^2 = \frac{|\alpha|^2}{2} \left[ \frac{N_e}{N_o} + \frac{N_o}{N_e} + 2 \cos (2(\theta - \phi)) \right],
\end{aligned}
$$
where $\alpha = |\alpha|e^{i\phi}$.
We then have 
$$
\begin{aligned}
I_F(\hat\rho_\Gamma, \hat{X}_\theta) =  \frac{16 |\alpha|^2}{N_\Gamma^2} \left[ (\Gamma + e^{-2|\alpha|^2})^2 \cos (2(\theta - \phi)) + (\Gamma^2 - e^{-4|\alpha|^2}) \right] + 2
\end{aligned}
$$
and obviously, $\theta = \phi$ gives the maximum value of quantum Fisher Information
$$
\max_\theta I_F(\hat\rho_\Gamma, \hat{X}_\theta) = 2 + \frac{32 |\alpha|^2}{N_\Gamma^2} \Gamma (\Gamma + e^{-2|\alpha|^2}).
$$
Thus, the metrological power of decohered cat states is given by 
$$
{\cal M}(\hat\rho_\Gamma) = \max\left\{ \frac{16 |\alpha|^2}{N_\Gamma^2} \Gamma (\Gamma + e^{-2|\alpha|^2}), 0 \right\}.
$$
Note that decohered cat states are nonclassical if and only if $\Gamma \neq 0$.
For $0< \Gamma \leq 1$, we can see that ${\cal M}(\hat\rho_\Gamma) >0$, which implies that every decohered even cat state gives nonvanishing value of ${\cal M}$ if and only if it is nonclassical state.
On the other hand, a decohered odd cat state with negative values of $\Gamma$ can give ${\cal M} =0$ when $-e^{-2 |\alpha|^2} \leq \Gamma <0$. This implies that some of odd cat states do not give quantum enhancement for displacement estimation tasks even they have negative $P$ representations.

\subsection{Gaussian states}
Suppose a generic $N$-mode Gaussain state $\hat{\rho}_{(\boldsymbol V, \boldsymbol d)}$ has an $2N \times 2N$ covariance matrix $\boldsymbol V$, and its the sympletic decomposition is given by $\boldsymbol V = \boldsymbol S \boldsymbol V^\oplus \boldsymbol S^T$, where $\boldsymbol V^\oplus = \bigoplus_{n=1}^N \nu_n \mathbb 1_n$ with $\boldsymbol S \boldsymbol \Omega \boldsymbol S^T = \boldsymbol \Omega$.
Using the fact that every sympletic matrix $\boldsymbol S$ corresponds to a multimode unitary operation $\hat{U}_{\boldsymbol S}$ (not necessarily be a linear optical map), we can express the state as follows:
$$
\hat{\rho}_{(\boldsymbol V,\boldsymbol d)} = \hat{U}_{\boldsymbol S} \left[ \bigotimes_{n=1}^N \hat{\tau}_n \right] \hat{U}_{\boldsymbol S}^\dagger,
$$
where $\hat{\tau}_n$ is a thermal state of $n$-th mode with mean-photon number $(\nu_n -1 )/2$.
Then the QFI with respect to $\hat{X}_{\boldsymbol \mu}$ is given by
$$
\begin{aligned}
I_F(\hat\rho_{(\boldsymbol V, \boldsymbol d)}, \hat{X}_{\boldsymbol \mu}) &= I_F( \hat{U}_{\boldsymbol S} \left( \otimes_{n=1}^N \hat{\tau}_n \right) \hat{U}_{\boldsymbol S}^\dagger, \hat{X}_{\boldsymbol \mu}) = I_F( \otimes_{n=1}^N \hat{\tau}_n, \hat{U}_{\boldsymbol S}^\dagger \hat{X}_{\boldsymbol \mu} \hat{U}_{\boldsymbol S}).
\end{aligned}
$$
It is important to note that $\hat{U}_{\boldsymbol S}^\dagger \hat{X}_{\boldsymbol \mu} \hat{U}_{\boldsymbol S} = \hat{X}_{\tilde{\boldsymbol \mu}}$, where
$\tilde{\boldsymbol \mu} = (\boldsymbol S^{-1})^T \boldsymbol \mu$, where $||\tilde{\boldsymbol \mu} ||^2 \neq 1 $ in general. 

Moreover, the QFI matrix for a multimode thermal state is given by $\boldsymbol F(\otimes_{n=1}^N \hat{\tau}_n) = 2 \oplus_{n=1}^N \nu_n^{-1} {\mathbb 1} _n = 2 (\boldsymbol V^{\oplus})^{-1}= 2 \boldsymbol S^T \boldsymbol V^{-1} \boldsymbol S$.
Using the expression of Eq.~(1), we get
$$
\begin{aligned}
I_F(\hat\rho_{(\boldsymbol V,\boldsymbol d)}, \hat{X}_{\boldsymbol \mu}) &= 2 ((\boldsymbol S^{-1})^T \boldsymbol \mu)^T \boldsymbol F(\otimes_{n=1}^N \hat{\tau}_n)  (\boldsymbol S^{-1})^T \boldsymbol \mu = 2 \boldsymbol \mu^T (\boldsymbol S^{-1} \boldsymbol S^T \boldsymbol V^{-1} \boldsymbol S (\boldsymbol S^{-1})^T) \boldsymbol \mu.
\end{aligned}
$$
We then obtain the desired formula:
$$
{\cal M}(\hat\rho_{(\boldsymbol {V,d})}) = \max \left\{ \lambda_{\max} \left[\boldsymbol S^{-1} \boldsymbol S^T \boldsymbol V^{-1} \boldsymbol S (\boldsymbol S^{-1})^T \right] - 1 , 0 \right\}.
$$
For a single-mode Gaussian state direct calculation leads to
$$
\boldsymbol S^{-1} \boldsymbol S^T \boldsymbol V^{-1} \boldsymbol S (\boldsymbol S^{-1})^T =
\left(
\begin{matrix}
\frac{e^{-2|\xi|}}{2\bar{n}_{\rm th}+1} & 0 \\
0 & \frac{e^{2|\xi|}}{2\bar{n}_{\rm th}+1}
\end{matrix}
\right),
$$
then ${\cal M} = \max \left\{ \exp(2 r) / (2\bar{n}_{\rm th} +1) -1, 0 \right\}$. Note that a single-mode Gaussian state is classical (i.e., has a positive $P$ representation) when $|\xi| \leq r_c = (1/2) \log (2 \bar{n}_{\rm th} +1)$. This proves {\bf Corollary 2.1} that {\it the metrological power ${\cal M}(\hat\rho_G) $ is zero if and only if a single-mode Gaussian state $\hat\rho_G$ is classical.}

\section{Proof of Theorems 3 and 4}
Analogous to the quadrature QFI, we define the $\alpha$-invested QFI for phase estimation as 
$$
{\cal I}_{\rm phase}^\alpha (\hat\rho) = \frac{1}{4} \max_{\hat{U}_L^\alpha} I_F( \hat{U}_L^\alpha \hat\rho \hat{U}_L^{\alpha \dagger}, \hat{a}^\dagger_1 \hat{a}_1)
$$
and prove the following Proposition:
\begin{proposition} [Relationship between the QFIs for phase and displacement estimation] The $\alpha$-invested QFI for phase estimation is bounded by the optimal quadrature QFI via the following:
\label{Prop2}
$$
\left[ \sqrt{{\cal I}_{\rm phase}^0(\hat\rho)} - |\alpha| \sqrt{{\cal I}_{\rm opt}(\hat\rho)}\right]^2 \leq {\cal I}_{\rm phase}^\alpha(\hat\rho) \leq \left[ \sqrt{{\cal I}_{\rm phase}^0(\hat\rho)} + |\alpha| \sqrt{{\cal I}_{\rm opt}(\hat\rho)}\right]^2,
$$
where ${\cal I}_{\rm phase}^0$ is the quadrature QFI for phase estimation without  any invested displacement operation.
\end{proposition}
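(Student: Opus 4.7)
The strategy is to transfer the displacement part of $\hat U_{L}^{\alpha}$ onto the phase generator $\hat a_{1}^{\dagger}\hat a_{1}$ and invoke the square-root triangle bound for the QFI proved in Lemma~\ref{QFIBound}. Writing $\hat U_{L}^{\alpha}=\hat D(\alpha)\hat U_{L}^{0}$, the identity $\hat D_{1}^{\dagger}(\alpha_{1})\hat a_{1}^{\dagger}\hat a_{1}\hat D_{1}(\alpha_{1})=\hat a_{1}^{\dagger}\hat a_{1}+\sqrt{2}|\alpha_{1}|\hat x_{\phi}+|\alpha_{1}|^{2}$ with $\phi=\arg\alpha_{1}$, together with the fact that the single-mode displacements on modes $n\neq 1$ commute with $\hat a_{1}^{\dagger}\hat a_{1}$ and that additive constants leave the QFI unchanged, gives
\begin{equation*}
I_{F}(\hat U_{L}^{\alpha}\hat\rho\hat U_{L}^{\alpha\dagger},\hat a_{1}^{\dagger}\hat a_{1})=I_{F}(\hat\rho,\hat A_{0}+\sqrt{2}|\alpha_{1}|\hat X_{\boldsymbol\mu'}),
\end{equation*}
where $\hat A_{0}=\hat U_{L}^{0\dagger}\hat a_{1}^{\dagger}\hat a_{1}\hat U_{L}^{0}$ and $\hat X_{\boldsymbol\mu'}=\hat U_{L}^{0\dagger}\hat x_{\phi}\hat U_{L}^{0}$ is a unit multimode quadrature because $\hat U_{L}^{0}$ is a symplectic rotation. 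Lemma~\ref{QFIBound} then yields the central two-sided estimate
\begin{equation*}
\Bigl|\sqrt{I_{F}(\hat\rho,\hat A_{0})}-\sqrt{2}|\alpha_{1}|\sqrt{I_{F}(\hat\rho,\hat X_{\boldsymbol\mu'})}\Bigr|\leq\sqrt{I_{F}(\hat U_{L}^{\alpha}\hat\rho\hat U_{L}^{\alpha\dagger},\hat a_{1}^{\dagger}\hat a_{1})}\leq\sqrt{I_{F}(\hat\rho,\hat A_{0})}+\sqrt{2}|\alpha_{1}|\sqrt{I_{F}(\hat\rho,\hat X_{\boldsymbol\mu'})}.
\end{equation*}

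For the upper bound I replace each ingredient by its global maximum. Unitary invariance of the QFI together with the definition of ${\cal I}_{\rm phase}^{0}$ gives $I_{F}(\hat\rho,\hat A_{0})=I_{F}(\hat U_{L}^{0}\hat\rho\hat U_{L}^{0\dagger},\hat a_{1}^{\dagger}\hat a_{1})\leq 4\,{\cal I}_{\rm phase}^{0}(\hat\rho)$, while Eq.~\eqref{QF1} gives $I_{F}(\hat\rho,\hat X_{\boldsymbol\mu'})\leq 2\,{\cal I}_{\rm opt}(\hat\rho)$. Plugging these into the right half of the central inequality, using $|\alpha_{1}|\leq|\alpha|$, and maximizing the left-hand side over all $\hat U_{L}^{\alpha}$ produces $\sqrt{{\cal I}_{\rm phase}^{\alpha}(\hat\rho)}\leq\sqrt{{\cal I}_{\rm phase}^{0}(\hat\rho)}+|\alpha|\sqrt{{\cal I}_{\rm opt}(\hat\rho)}$, and squaring yields the stated upper bound.

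For the lower bound I combine two complementary witness strategies. Strategy (a) takes $\hat U_{L}^{0}$ to be a maximizer of ${\cal I}_{\rm phase}^{0}$ and places no displacement on mode $1$; the monotonicity ${\cal I}_{\rm phase}^{\alpha}\geq{\cal I}_{\rm phase}^{\beta}$ for $|\alpha|\geq|\beta|$ noted in the main text then gives $\sqrt{{\cal I}_{\rm phase}^{\alpha}}\geq\sqrt{{\cal I}_{\rm phase}^{0}}$. Strategy (b) instead selects $\hat U_{L}^{0}$ so that the ${\cal I}_{\rm opt}$-maximizer $\boldsymbol\mu^{*}$ is rotated into $\hat x_{1}$, sets $|\alpha_{1}|=|\alpha|$ with phase aligned to this quadrature, and applies the left half of the central inequality with $I_{F}(\hat\rho,\hat X_{\boldsymbol\mu^{*}})=2\,{\cal I}_{\rm opt}(\hat\rho)$ and the global bound $\sqrt{I_{F}(\hat\rho,\hat A_{0})}\leq 2\sqrt{{\cal I}_{\rm phase}^{0}(\hat\rho)}$, yielding $\sqrt{{\cal I}_{\rm phase}^{\alpha}}\geq|\alpha|\sqrt{{\cal I}_{\rm opt}}-\sqrt{{\cal I}_{\rm phase}^{0}}$. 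The elementary identity $\max\{a,\,b-a\}\geq|a-b|$ with $a=\sqrt{{\cal I}_{\rm phase}^{0}}$ and $b=|\alpha|\sqrt{{\cal I}_{\rm opt}}$ then gives $\sqrt{{\cal I}_{\rm phase}^{\alpha}}\geq\bigl|\sqrt{{\cal I}_{\rm phase}^{0}}-|\alpha|\sqrt{{\cal I}_{\rm opt}}\bigr|$, which on squaring is the claimed lower bound.

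The hardest part is the lower bound, since no single witness unitary can be tight in both regimes: strategy (a) dominates when $|\alpha|\sqrt{{\cal I}_{\rm opt}}\lesssim\sqrt{{\cal I}_{\rm phase}^{0}}$, whereas strategy (b) dominates in the opposite regime, and the absolute-value form of the stated bound emerges only as the envelope of the two witnesses. Everything else is routine: the upper bound follows from a single application of Lemma~\ref{QFIBound} and global maximization, and the secondary observation that $\sqrt{I_{F}(\hat\rho,\hat A_{0})}\leq 2\sqrt{{\cal I}_{\rm phase}^{0}}$ holds even for the ${\cal I}_{\rm opt}$-tuned $\hat U_{L}^{0}$ used in strategy (b) is automatic from the maximum defining ${\cal I}_{\rm phase}^{0}$.
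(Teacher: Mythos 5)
Your upper bound is exactly the paper's argument: transfer the displacement onto the phase generator, apply Lemma~\ref{QFIBound}, and bound the two terms by ${\cal I}_{\rm phase}^0$ and ${\cal I}_{\rm opt}$ with $|\alpha_1|\leq|\alpha|$ — no issues there. Your lower bound repackages the paper's two-case analysis as the envelope of two witness strategies; strategy (b) is precisely the paper's case ${\cal I}_{\rm phase}^0<|\alpha|^2{\cal I}_{\rm opt}$ (displace along the optimal quadrature, use the reverse triangle bound), and the elementary step $\max\{a,\,b-a\}\geq|a-b|$ is sound, in fact marginally stronger than the stated bound.

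The gap is in strategy (a). The inequality ${\cal I}_{\rm phase}^{\alpha}(\hat\rho)\geq{\cal I}_{\rm phase}^{0}(\hat\rho)$ is not available to cite: the main text asserts (without proof) the analogous monotonicity only for ${\cal M}_{\rm phase}^{\alpha}$, which is defined with a vacuum ancilla attached, whereas Proposition~\ref{Prop2} concerns a bare $\hat\rho$. Your implicit justification — ``place no displacement on mode 1'' — presupposes that the mandatory budget $|\alpha|^2=\sum_{n}|\alpha_n|^2$ can be dumped onto some other mode, where it commutes with $\hat a_1^\dagger\hat a_1$ and leaves the QFI untouched. That works whenever $\hat\rho$ has $N\geq2$ modes (in particular for the states $\hat\rho\otimes\ket{0}\bra{0}$ used in Theorems~3 and 4), but it fails for a single-mode $\hat\rho$, where all of $|\alpha|$ must land on the probed mode. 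Two ways to close it: (i) prove the monotonicity directly — the QFI is a positive semidefinite quadratic form in the generator, so $v\mapsto I_F\bigl(\hat\rho,\hat A_0+\sqrt{2}\,\hat{\boldsymbol R}^T v\bigr)$ is convex in $v$ and its maximum over $|v|=|\alpha|$ is at least its value at $v=0$; or (ii) do what the paper does in this regime: when ${\cal I}_{\rm phase}^0\geq|\alpha|^2{\cal I}_{\rm opt}$, put the whole displacement along $\hat x_1$ on top of the phase-optimal rotation and use the case condition so the difference in the reverse triangle bound stays nonnegative, which already gives $\sqrt{{\cal I}_{\rm phase}^{\alpha}}\geq\sqrt{{\cal I}_{\rm phase}^{0}}-|\alpha|\sqrt{{\cal I}_{\rm opt}}$. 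With either fix your argument is complete and coincides with the paper's in all essentials.
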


\begin{proof}
We first prove the upper bound. Assume that ${\cal I}_{\rm phase}^\alpha$ reaches its maximal value by choosing a linear optical unitary $\hat{U}_L^{\alpha \star} = \left[ \prod_{n=1}^N \hat{D}_n(\alpha_n^\star) \right] \hat{V}_L^{0 \star}$, where $\alpha_n^\star = |\alpha_n^\star| e^{i\phi_n} $ and $\hat{V}_L^0$ is a passive linear optical unitary corresponding to an element of $O(2N)$, i.e., a rotation of canonical observables. Then we have
$$
\begin{aligned}
{\cal I}_{\rm phase}^\alpha(\hat\rho) 
&= \frac{1}{4} I_F(\hat{U}_L^{\alpha \star} \hat\rho \hat{U}_L^{\alpha \star \dagger}, \hat{a}_1^\dagger \hat{a}_1) \\
& = \frac{1}{4} I_F(\hat{V}_L^{0 \star} \hat\rho \hat{V}_L^{0 \star \dagger}, (\hat{a}_1^\dagger + \alpha_1^{\star *})(\hat{a}_1 + \alpha_1^{\star})  ) \\
& = \frac{1}{4} I_F\left(\hat{V}_L^{0 \star} \hat\rho \hat{V}_L^{0 \star \dagger}, \hat{a}_1^\dagger \hat{a}_1 + |\alpha_1^{\star}| ( \hat{a}_1 e^{-i\phi_1} + \hat{a}_1^\dagger e^{ i \phi_1} )\right)\\
& \leq \frac{1}{4} \Big[ \sqrt{I_F( \hat{V}_L^{0 \star} \hat\rho \hat{V}_L^{0 \star \dagger}, \hat{a}_1^\dagger \hat{a}_1)} + \sqrt{I_F( \hat{V}_L^{0 \star} \hat\rho \hat{V}_L^{0 \star \dagger},  |\alpha_1^{\star}| ( \hat{a}_1 e^{-i\phi_1} + \hat{a}_1^\dagger e^{ i \phi_1} )} \Big]^2 \\
&= \frac{1}{4} \Big[ \sqrt{I_F( \hat{V}_L^{0 \star} \hat\rho \hat{V}_L^{0 \star \dagger}, \hat{a}_1^\dagger \hat{a}_1)} + \sqrt{2} |\alpha_1^{\star}| \sqrt{I_F( \hat\rho, \hat{X}_{\boldsymbol \mu}) } \Big]^2 \\
& \leq \Big[ \sqrt{{\cal I}_{\rm phase}^0(\hat\rho)} + |\alpha|\sqrt{ {\cal I}_{\rm opt}(\hat\rho)} \Big]^2,
\end{aligned}
$$
where the first inequality is given by Lemma~\ref{QFIBound} and the second inequality is from optimality of ${\cal I}_{\rm phase}^0$ and ${\cal I}_{\rm opt}$ and $|\alpha| = \sqrt{\sum_{n=1}^N |\alpha_n^\star|^2 }\geq |\alpha_1^\star|$.

The lower bound can be proven by considering two different cases.
If ${\cal I}_{\rm phase}^0 (\hat\rho) \geq |\alpha|^2 {\cal I}_{\rm opt} (\hat\rho)$, we take $\hat{U}_L^\alpha = \hat{D}_1(|\alpha|) \hat{V}_L^{0}$, where $I_F (\hat{V}_L^{0} \hat\rho \hat{V}_L^{0 \dagger}, \hat{a}_1^\dagger \hat{a}_1) = 4{\cal I}_{\rm phase}^0(\hat\rho)$. If  ${\cal I}_{\rm phase}^0 (\hat\rho) < |\alpha|^2 {\cal I}_{\rm opt} (\hat\rho)$, we take $\hat{U}_L^\alpha = \hat{D}_1(|\alpha|) \hat{V}_L^{0}$, where $I_F (\hat{V}_L^{0} \hat\rho \hat{V}_L^{0 \dagger}, \hat{x}_1) = 2{\cal I}_{\rm opt}(\hat\rho)$. In either case, we have
$$
\begin{aligned}
{\cal I}_{\rm phase}^\alpha (\hat\rho) &\geq 
\frac{1}{4} I_F(\hat{U}_L^{\alpha} \hat\rho \hat{U}_L^{\alpha \dagger}, \hat{a}_1^\dagger \hat{a}_1) \\
&\geq \frac{1}{4} \Big[ \sqrt{I_F( \hat{V}_L^{0} \hat\rho \hat{V}_L^{0 \dagger}, \hat{a}_1^\dagger \hat{a}_1)} - \sqrt{2} |\alpha| \sqrt{I_F( \hat{V}_L^0 \hat\rho \hat{V}_L^{0 \dagger}, \hat{x}_1) } \Big]^2 \\
&\geq \Big[ \sqrt{{\cal I}_{\rm phase}^0(\hat\rho)} - |\alpha|\sqrt{ {\cal I}_{\rm opt}(\hat\rho)} \Big]^2,
\end{aligned}
$$
where the first (third) inequality comes from the optimality of ${\cal I}_{\rm phase}^\alpha$ (${\cal I}_{\rm phase}^0$ and ${\cal I}_{\rm opt}$) the second inequality is given by Lemma~\ref{QFIBound}.
\end{proof}

\subsection{Proof of Theorem~3}
\begin{proof}
The theorem can be proved by showing that there exists $\alpha$ and $\hat{U}_L^\alpha$ such that
\begin{equation}
I_F (\hat{U}_L^\alpha (\hat\rho \otimes \ket{0}\bra{0}) \hat{U}_L^{\alpha \dagger}, \hat{a}_1^\dagger \hat{a}_1) > 4 {\rm Tr} \left[ \hat{U}_L^\alpha  (\hat\rho \otimes \ket{0}\bra{0}) \hat{U}_L^{\alpha \dagger} \hat{a}_1^\dagger \hat{a}_1\right].\label{eqn:sss}
\end{equation} 
Because of the lower bound appearing in Proposition~\ref{Prop2}, for any $\alpha$, we can select a $\hat{U}_L^\alpha $ such that
\begin{eqnarray}
\frac{1}{4} I_F (\hat{U}_L^\alpha  (\hat\rho \otimes \ket{0}\bra{0}) \hat{U}_L^{\alpha \dagger}, \hat{a}_1^\dagger \hat{a}_1) &\geq & \left[ \sqrt{{\cal I}_{\rm phase}^0 (\hat\rho \otimes \ket{0}\bra{0})} - |\alpha| \sqrt{ {\cal I}_{\rm opt}  (\hat\rho \otimes \ket{0}\bra{0})} \right]^2 \nonumber \\
&=& |\alpha|^2 {\cal I}_{\rm opt}  (\hat\rho \otimes \ket{0}\bra{0}) + {\cal I}_{\rm phase}^0 (\hat\rho \otimes \ket{0}\bra{0})\nonumber \\ &{}&- 2 |\alpha| \sqrt{{\cal I}_{\rm phase}^0 (\hat\rho \otimes \ket{0}\bra{0}) {\cal I}_{\rm opt} (\hat\rho \otimes \ket{0}\bra{0})} .
\label{eqn:rrr}
\end{eqnarray}

With $\hat{U}_L^\alpha $ chosen in this way, one computes the right hand side of Eq.(\ref{eqn:sss}) by using $\hat{U}_L^\alpha =\left( \prod_{j=1}^{N} \hat{D}_{j}(\alpha_{j}) \right) \hat{V}_{L}^{0}$, where $\hat{V}_{L}^{0}$ is a number-conserving unitary:
$$
\begin{aligned}
{\rm Tr}\left[ \hat{U}_L^\alpha  (\hat\rho \otimes \ket{0}\bra{0}) \hat{U}_L^{\alpha \dagger} \hat{a}_1^\dagger \hat{a}_1 \right] &= {\rm Tr} \left[ \hat{V}_L^0 (\hat\rho \otimes \ket{0}\bra{0}) \hat{V}_L^{0 \dagger} \hat{a}_1^\dagger \hat{a}_1 \right] + |\alpha_{1}| {\rm Tr} \left[ \hat{V}_L^0 (\hat\rho \otimes \ket{0}\bra{0}) \hat{V}_L^{0 \dagger} (\hat{a}_1 e^{-i \phi_1} + \hat{a}_1^\dagger e^{i \phi_1} )\right] + |\alpha_{1}|^2 \\
&\leq \bar{n} + \sqrt{2}|\alpha | {\rm Tr} \left[ (\hat\rho \otimes \ket{0}\bra{0}) \hat{X}_{\boldsymbol \mu}\right] + |\alpha|^2 \\ &\leq \bar{n} + 2|\alpha | \sqrt{ \overline{n}+{N+1\over 2}} + |\alpha|^2
\end{aligned}
$$
where in the first line we have used $\phi_{1}=\text{Arg}\alpha_{1}$, and in the second line we have noted that ${\rm Tr} \left[ \hat{V}_L^0 (\hat\rho \otimes \ket{0}\bra{0}) \hat{V}_L^{0 \dagger} \hat{a}_1^\dagger \hat{a}_1 \right] \le {\rm Tr} \left[ \hat{V}_L^0 (\hat\rho \otimes \ket{0}\bra{0}) \hat{V}_L^{0 \dagger} \sum_{j=1}^{N}\hat{a}_j^\dagger \hat{a}_j \right] = \overline{n}$, $|\alpha_{1}|^{2}\le |\alpha |^{2}$, and defined ${\boldsymbol \mu} \in \mathbb{R}^{2N+2}$, $\Vert {\boldsymbol \mu} \Vert = 1$ such that $\hat{X}_{\boldsymbol \mu} =\hat{V}_L^{0 \dagger}\left( {\hat{a}_1 e^{-i \phi_1} + \hat{a}_1^\dagger e^{i \phi_1} \over \sqrt{2}}\right)\hat{V}_L^0$. To derive the third line, note that for any unit vector ${\boldsymbol v}\in \mathbb{R}^{2M}$, the operator inequality $(\hat{\boldsymbol R}^T{\boldsymbol v})^{2}\le \hat{\boldsymbol R}^T \hat{\boldsymbol R}$ holds, therefore, it follows that ${\rm Tr} \left[ (\hat\rho \otimes \ket{0}\bra{0}) \hat{X}_{\boldsymbol \mu}\right] \le \sqrt{{\rm Tr} \left[ (\hat\rho \otimes \ket{0}\bra{0}) \hat{X}_{\boldsymbol \mu}^{2}\right]} \le \sqrt{{\rm Tr} \left[ (\hat\rho \otimes \ket{0}\bra{0}) \hat{\boldsymbol R}^T \hat{\boldsymbol R}\right]} \le \sqrt{2}\sqrt{{\rm Tr} \left[ (\hat\rho \otimes \ket{0}\bra{0}) \sum_{j=1}^{N+1}(\hat{a}_{j}^{\dagger}\hat{a}_{j}+{1\over 2}) \right]}=\sqrt{2\overline{n}+N+1}$. 

By combining the two expressions given above, we have
$$
\begin{aligned}
{\cal M}_{\rm phase}^\alpha(\hat\rho) &\geq \frac{1}{4} I_F (\hat{U}_L^\alpha (\hat\rho \otimes \ket{0}\bra{0}) \hat{U}_L^{\alpha \dagger}, \hat{a}_1^\dagger \hat{a}_1) - {\rm Tr} \left[ \hat{U}_L^\alpha (\hat\rho \otimes \ket{0}\bra{0}) \hat{U}_L^{\alpha \dagger} \hat{a}_1^\dagger \hat{a}_1\right] \\
& \geq   |\alpha|^2 ({\cal I}_{\rm opt}(\hat\rho \otimes \ket{0}\bra{0})-1)  - 2 |\alpha| \left( \sqrt{{\cal I}_{\rm phase}^0(\hat\rho \otimes \ket{0}\bra{0}) {\cal I}_{\rm opt}(\hat\rho \otimes \ket{0}\bra{0})} + \sqrt{\bar{n} + \frac{N+1}{2}} \right) - \bar{n} \\
& \geq  |\alpha|^2 {\cal M}(\hat\rho) - 2 |\alpha| K  - \bar{n},
\end{aligned}
$$
where ${\cal M}(\hat\rho) = {\cal I}_{\rm opt}(\hat\rho \otimes \ket{0}\bra{0}) -1 > 0$ and $K =  \sqrt{{\cal I}_{\rm phase}^0(\hat\rho \otimes \ket{0}\bra{0}) {\cal I}_{\rm opt}(\hat\rho \otimes \ket{0}\bra{0})} +\sqrt{\bar{n} + \frac{N+1}{2}} >0$. Therefore we can always choose sufficiently large $|\alpha| > \frac{K + \sqrt{K^2 + {\cal M}(\hat\rho)^2 \bar{n} }}{{\cal M}(\hat\rho)}$ and linear optical unitary $\hat{U}_L^\alpha$ to achieve ${\cal M}_{\rm phase}^\alpha(\hat\rho)>0$. Furthermore, for a large value of $\alpha$, we can always have 
$$
\lim_{|\alpha| \rightarrow \infty} \frac{{\cal M}_{\rm phase}^\alpha (\hat\rho)}{|\alpha|^2} \geq {\cal M}(\hat\rho).
$$
Meanwhile, we can also observe that
$$
{\cal M}_{\rm phase}^\alpha (\hat\rho) \leq {\cal I}_{\rm phase}^\alpha(\hat\rho \otimes \ket{0}\bra{0} ) \leq \left[ \sqrt{{\cal I}_{\rm phase}^0(\hat\rho \otimes \ket{0}\bra{0} )} + |\alpha| \sqrt{{\cal I}_{\rm opt}(\hat\rho \otimes \ket{0}\bra{0} )}\right]^2,
$$
then $$\lim_{|\alpha| \rightarrow \infty} \frac{{\cal M}_{\rm phase}^\alpha (\hat\rho)}{|\alpha|^2} \leq {\cal I}_{\rm opt}(\hat\rho \otimes \ket{0}\bra{0} ) = {\cal M}(\hat\rho) + 1.$$
\end{proof}

\subsection{Proof of Theorem~4}
\begin{proof}
We first note that $\hat\rho_0= \hat{W}_L \hat\rho \hat{W}_L^\dagger$ satisfying ${\rm Tr} \hat\rho_0 \hat{\boldsymbol R} = 0$ can be obtained by a linear optical unitary $\hat{W}_L = \prod_{n=1}^N \hat{D}_n(\beta_n)$, where $\beta_n = -{\rm Tr} \left[ \hat\rho (\hat{x}_n  + i \hat{p}_n) /\sqrt{2} \right]$. Therefore, optimization over linear optical unitaries is equivalent between starting with $\hat\rho_0$ and $\hat\rho$. We also note that the metrological power ${\cal M}$ with respect to displacement estimation is invariant under a linear optical unitary, i.e. ${\cal M}(\hat\rho_0) = {\cal M}(\hat\rho)$. We then introduce the big/small-O and $\Theta$-notations to describe the leading order of a real-valued function $f(x)$. We refer $f(x) = {\cal O}(x^n)$ when $\limsup_{x\rightarrow \infty} f(x) / x^n < \infty $ and $f(x) = o(x^n)$ when $\limsup_{x\rightarrow \infty} f(x) / x^n =0 $. We also say $f(x) = \Theta(x^n)$ when $f(x)$ scales by $x^n$ by means of $\exists k_1, k_2 > 0$ and $\exists M>0$ such that $\forall x > M$, $k_1 x^n \leq f(x) \leq k_2 x^n$.\\

The {``\it if"} part can be proven as follows:
Note that if ${\cal M}_{\rm phase}^0(\hat\rho_0) = {\Theta } (\bar{n}_{\hat\rho_0}^k)$ with $k\geq2$, then the HS can obviously be obtained by $I_F(\hat{U}_L^0 (\hat\rho_0 \otimes \ket{0}\bra{0}) \hat{U}_L^{0 \dagger}, \hat{a}_1^\dagger \hat{a}_1) \geq 4 {\cal M}_{\rm phase}^0(\hat\rho_0) = \Theta(\bar{n}_{\hat\rho_0}^k)$ with $k\geq2$ because $\hat{U}_L^0$ does not change the mean photon number in the system. On the other hand, suppose that ${\cal M}(\hat\rho_0) = \Theta (\bar{n}_{\hat\rho_0})$  which is equivalent to ${\cal I}_{\rm opt} (\hat\rho) = {\cal I}_{\rm opt} (\hat\rho_0) = \Theta (\bar{n}_{\hat\rho_0})$  and ${\cal M}_{\rm phase}^0 = \Theta(\bar{n}_{\hat\rho_0}^k)$ with $k<2$. Then we can choose the optimal linear optical unitary $\hat{U}_L^\alpha = \hat{D}_1(|\alpha|) \hat{V}_L^0$ to get ${\cal I}_{\rm opt} (\hat\rho_0 \otimes \ket{0}\bra{0}) = (1/2) I_F( \hat{V}_L^0 (\hat\rho_0 \otimes \ket{0}\bra{0}) \hat{V}_L^{0 \dagger} , \hat{x}_1)$ and $|\alpha| = \sqrt{\kappa \bar{n}_{\hat\rho_0}}$ with a sufficiently large constant $\kappa$.
In this case, we have
$$
\begin{aligned}
I_F( \hat{U}_L^\alpha (\hat\rho_0 \otimes \ket{0}\bra{0}) \hat{U}_L^{\alpha \dagger}, \hat{a}_1^\dagger \hat{a}_1) &\geq \left[ \sqrt{{\cal I}_{\rm phase}^0 (\hat\rho_0 \otimes \ket{0}\bra{0})} - |\alpha| \sqrt{{\cal I}_{\rm opt} (\hat\rho_0 \otimes \ket{0}\bra{0})}\right]^2 \\
&\approx \kappa \bar{n}_{\hat\rho_0} {\cal I}_{\rm opt} (\hat\rho_0 \otimes \ket{0}\bra{0})\\
& = \Theta (\bar{n}_{\hat\rho_0}^2)
\end{aligned}
$$
by using Proposition~\ref{Prop2}.
At the same time, we have 
$$
\bar{n}_{\hat\sigma} = \sum_{n=1}^{N+1} {\rm Tr} [\hat{U}_L^\alpha (\hat\rho_0 \otimes \ket{0} \bra{0}) \hat{U}_L^{\alpha \dagger} \hat{a}_n^\dagger \hat{a}_n]  = \bar{n}_{\hat\rho_0} + |\alpha|^2 = \Theta (\bar{n}_{\hat\rho_0}).
$$
Therefore, it is possible to reach the HS 
$$ 
I_F( \hat{\sigma}, \hat{a}_1^\dagger \hat{a}_1) = \Theta (\bar{n}_{\hat\sigma}^2),
$$
where $\hat\sigma = \hat{U}_L^\alpha (\hat\rho_0 \otimes \ket{0}\bra{0}) \hat{U}_L^{\alpha \dagger}$ is obtained by using the linear optical unitary $\hat{U}_L^\alpha$ in addition to the vacuum ancilla $\ket{0}$.\\

The {``\it only if"} part can be proven as follows:
Suppose that the scaling of the parameters are given by ${\cal M}_{\rm phase}^0(\hat\rho_0) =  \Theta (\bar{n}_{\hat\rho_0}^k)$, ${\cal M}_{\rm opt}(\hat\rho_0) = \Theta (\bar{n}_{\hat\rho_0}^l)$, and $|\alpha|^2 = \Theta (\bar{n}_{\hat\rho_0}^m)$.
By using Proposition~\ref{Prop2}, we then have 
$$
I_F(\hat\sigma, \hat{a}^\dagger_1\hat{a}_1) = {\cal O}(\bar{n}_{\hat\rho_0}^k) + {\cal O}(\bar{n}_{\hat\rho_0}^{m+l}) + {\cal O}(\bar{n}_{\hat\rho_0}^{(k+l+m)/2})  = {\cal O}(\bar{n}_{\hat\rho_0}^p),
$$
where $\hat\sigma = \hat{U}_L^\alpha (\hat\rho_0 \otimes \ket{0}\bra{0}) \hat{U}_L^{\alpha \dagger}$ and $p = {{\rm max} \{ k, m+l, (k+l+m)/2\} }$.
Meanwhile, the mean photon number after acting the linear optical unitary $\hat{U}_L^\alpha$ scales by
$$
\bar{n}_{\hat\sigma} = \sum_{n=1}^{N+1} {\rm Tr} [\hat{U}_L^\alpha (\hat\rho_0 \otimes \ket{0} \bra{0}) \hat{U}_L^{\alpha \dagger} \hat{a}_n^\dagger \hat{a}_n]  = \bar{n}_{\hat\rho_0} + |\alpha|^2 = \Theta (\bar{n}_{\hat\rho_0}^q),
$$
where $q = {\rm max} \{ m, 1 \}$. Then we calculate the scaling behavior of $I_F(\hat\sigma, \hat{a}^\dagger_1\hat{a}_1)$ by $\bar{n}_{\hat\sigma}$ as
$$
I_F(\hat\sigma, \hat{a}^\dagger_1\hat{a}_1) = {\cal O} (\bar{n}_{\hat\rho_0}^p) = {\cal O}(\bar{n}_{\hat\sigma}^r) = o(\bar{n}_{\hat\sigma}^2)
$$
with $r<2$ when $k<2$ and $l<1$, regardless of the choice of $m$. In this case, therefore, it is impossible to reach the HS by using linear optical unitary in addition to the vacuum ancilla.\\

Finally, we show that ${\cal M}(\hat\rho) = \Theta(\bar{n}_{\hat\rho})$ is the sufficient condition to reach the HS by noting that
$$
{\cal M}(\hat\rho_0) = {\cal M}(\hat\rho) = \Theta(\bar{n}_{\hat\rho}) \geq \Theta (\bar{n}_{\hat\rho_0})
$$
since $\bar{n}_{\hat\rho} = \bar{n}_{\hat\rho_0} + |\beta|^2 \geq \bar{n}_{\hat\rho_0}$ where $|\beta|^2 = \sum_{n=1}^N |\beta_n|^2$ for $\hat{W}_L = \prod_{n=1}^N \hat{D}_n(\beta_n)$ defined above.

\end{proof}

\end{document}